\newtheorem{algorithm}{Algorithm}
\newtheorem{lemma}{Lemma}
\newenvironment{proof}[1][Proof]{\noindent \textbf{#1.} }{\  \rule{0.5em}{0.5em}}
\begin{document}

\begin{titlepage}
\title
{On Bayesian Filtering for Markov Regime Switching Models\thanks{This paper should not be reported as 
representing the views of Norges Bank. The views expressed are those of the authors and do not 
necessarily reflect those of Norges Bank.}}
\author{Nigar Hashimzade\thanks
{Department of Economics and Finance, Brunel University,
United Kingdom; e-mail: nigar.hashimzade@brunel.ac.uk}
\and Oleg Kirsanov\thanks
{Adam Smith Business School, University of Glasgow,
United Kingdom; e-mail: oleg.kirsanov@glasgow.ac.uk}
\and Tatiana Kirsanova\thanks
{Adam Smith Business School, University of Glasgow,
United Kingdom; e-mail: tatiana.kirsanova@glasgow.ac.uk}
\and Junior Maih\thanks{Norges Bank; e-mail Junior.Maih@norges-bank.no}}
\date
{\today}
\maketitle
\begin{abstract}
This paper presents a framework for empirical analysis of dynamic macroeconomic models using Bayesian filtering, with a specific focus on the state-space formulation of Dynamic Stochastic General Equilibrium (DSGE) models with multiple regimes. We outline the theoretical foundations of model estimation, provide the details of two families of powerful multiple-regime filters, IMM and GPB, and construct corresponding multiple-regime smoothers. A simulation exercise, based on a prototypical New Keynesian DSGE model, is used to demonstrate the computational robustness of the proposed filters and smoothers and evaluate their accuracy and speed for a selection of filters from each family. We show that the canonical IMM filter is faster and is no less, and often more, accurate than its competitors within IMM and GPB families, the latter including the commonly used Kim and Nelson (1999) filter. Using it with the matching smoother improves the precision in recovering unobserved variables by about 25\%. Furthermore, applying it to the U.S. 1947-2023 macroeconomic time series, we successfully identify significant past policy shifts including those related to the post-Covid-19 period. Our results demonstrate the practical applicability and potential of the proposed routines in macroeconomic analysis.
\end{abstract}
\par
Keywords: Markov switching models, Filtering, Smoothing
\par
JEL Reference Number: C11, C32, C54, E52
\end{titlepage}\baselineskip=18pt

\section{Introduction}

In the evolving landscape of macroeconomic analysis, the empirical
examination of dynamic models has become increasingly sophisticated and
computationally demanding. This paper contributes to this area by presenting
a comprehensive framework for empirical analysis of state space models with
multiple regimes using Bayesian filtering. Our work introduces enhanced
filter and smoother algorithms, crucial for accurate macroeconomic modelling
and estimation.

Our study is motivated by the increasing popularity of Bayesian methods in
macroeconomic time series analysis in the Dynamic Stochastic General
Equilibrium (DSGE) framework, usually presented in the state-space form.
These methods have gained traction due to their ability to effectively
handle complex models with latent variables and structural changes. Bayesian
perspective is invaluable for disentangling convoluted macroeconomic
phenomena, such as differentiating between external shocks and policy-driven
economic patterns.

Despite significant advancements in the literature, the field continues to
face various challenges, especially when estimating macroeconomic dynamic
models with multiple regimes. One such challenge is selecting an efficient
and accurate filter for likelihood computation. Another challenge is the
task of reconstructing latent variables through the smoothing of estimated
state variables and regime probabilities.

The prevalent use of the Kim and Nelson (\citealp{Kim1994}, %
\citealp{kim1999state}) filter in macroeconomic applications (see, \textit{%
inter alia}, \citealp{DavigDoh2014}, \citealp*{ChangMaihTan2021},\newline
\citealp*{CLL2022}) suggests limited exploration of alternative methods in
this field. Despite its unquestionable power, Kim and Nelson filter is known
to have certain flaws. Namely, it is computationally intensive and, when
extended to smoothing algorithms, computationally unstable. Perhaps, the
latter is the reason for scant use of smoothing for more accurate recovery
of latent variables in multiple regime models in the existing economic
literature.

Our paper makes both theoretical and empirical contributions in this domain.
First, we introduce and extend the Interactive Multiple Model (IMM) filter,
originally developed by Bar-Shalom (\citealp{BlomBarshalom1988}). Despite
its recognition in the engineering literature, the IMM filter remains
underutilised in economic applications. In addition, we claim that the Kim
and Nelson filter belongs to the family of Generalised Pseudo-Bayesian (GPB)
filters and present it in a general form that accommodates different orders
of approximation. Finally, we develop a computationally stable and easily
implementable smoothing algorithm that can be conveniently adapted to a wide
range of filters in multiple regime setting. Empirically, we apply these
methods to a prototypical New Keynesian DSGE model and the U.S.
macroeconomic time series spanning from 1947 to 2023. This exercise succeeds
in identifying significant policy shifts, particularly in the post-Covid-19
era, and thus demonstrates the practical relevance of our methods.

We validate the superiority of the proposed filter-smoother algorithm using
rigorous simulation exercises. Our findings indicate that the IMM filter
outperforms the Kim and Nelson filter in terms of computational speed while
maintaining comparable accuracy. Moreover, the implementation of our
proposed smoother significantly enhances the precision in the recovery of
latent variables, with an approximate 25\% reduction in estimation errors.
These empirical insights reveal the importance of smoothing in this
framework, overlooked in the existing literature.

One should note that while the Kim and Nelson filter has dominated the
analysis of multiple-regime macroeconomic models, there have been a few
exceptions. \citet*{LiuWangZha2013}{\ apparently }applied IMM to study the
role of land-price dynamics in macroeconomy. \citet{BinningMaih2015} used
IMM to show how certain non-linear filters can be adapted to the multiple
regime setting. \citet*{BLM2018} applied it to study the interplay between
oil price shocks and macroeconomic instability. More recently, \citet*{%
LKMR2024} used IMM in a study of monetary and fiscal policy changes in the
United States. We are unaware of other IMM applications in macroeconomics to
date.

All computations presented in this paper were implemented in the RISE$^{%
\textcopyright}$ toolbox (\citealp{MaihRISE2015}).\footnote{%
RISE stands for `Rationality in Switching Environments'. The codes and
documentation are available at https://github.com/jmaih/RISE\_toolbox}

The paper is organised as follows. The next section presents theoretical
foundations. We derive two families of filters, one of which encompasses the
Kim and Nelson filter and the other one encompasses the canonical IMM. We
derive a Markov-switching smoother adapted to the appropriate filter family.
Section \ref{Sec Simulations} tests the efficacy of the proposed filter and
smoother algorithms on artificial data. An empirical investigation is
presented in Section \ref{Sec Empirics}. Section \ref{Sec Conclusions}\
concludes.

\section{Switching Filters and Smoothers\label{Sec Theory}}

\subsection{The Filtering Problem}

We start with a general multiple-regime state-space representation of a
linear discrete-time dynamic model consisting of a measurement equation (\ref%
{me}) and a transition equation (\ref{te}), 
\begin{eqnarray}
y_{t} &=&c_{y,s_{t}}+Z_{s_{t}}\alpha _{t}+g_{s_{t}}\varepsilon _{t},
\label{me} \\
\alpha _{t} &=&c_{\alpha ,s_{t}}+T_{s_{t}}\alpha _{t-1}+R_{s_{t}}\eta _{t},
\label{te}
\end{eqnarray}%
where $y_{t}$ a $p\times 1$ vector of observations, $\alpha _{t}$ is a $%
m\times 1$ vector of unobserved state variables, and $\varepsilon _{t}$ and $%
\eta _{t}$ are independent standard Gaussian random variables, $t=1,\ldots
,n $. All model parameters, $\left \{ c_{y,s_{t}},c_{\alpha
,s_{t}};Z_{s_{t}},T_{s_{t}};g_{s_{t}},R_{s_{t}}\right \} $, depend on regime 
$s_{t}$, which is an outcome of a Markov process with $h\geq 1$ discrete
regimes. This process is described by the transition probability matrix with
the generic element $Q(s_{t-1},s_{t})=\func{Pr}\left[ s_{t}\mid s_{t-1}%
\right] $, so that $\sum_{s_{t}=1}^{h}Q(s_{t-1},s_{t})=1$ for every regime $%
s_{t-1}$ and every time $t$.

The information available at time $t$ is fully contained in the vector of
observations $Y_{t}:=\{y_{1},...,y_{t}\}$. The object of interest is an
estimate of the unobserved state vector $\alpha _{t},$ for which three
estimators, $\alpha _{t\mid t-1},$ $\alpha _{t\mid t}$ and $\alpha _{t\mid
n} $ are available in Bayesian framework. The first estimator is the
forecast of $\alpha _{t}$ based on information $Y_{t-1}$,%
\begin{equation*}
\alpha _{t\mid t-1}:=\mathbb{E}\left[ \alpha _{t}\mid Y_{t-1}\right] .
\end{equation*}%
Its mean square error (MSE) is defined as%
\begin{equation*}
P_{t\mid t-1}:=\mathbb{E}\left[ \left( \alpha _{t}-\alpha _{t\mid
t-1}\right) \left( \alpha _{t}-\alpha _{t\mid t-1}\right) ^{\prime }\mid
Y_{t-1}\right] .
\end{equation*}

In the linear single-regime setting with Gaussian shocks these objects and
the associated likelihood $f\left( y_{t}\mid Y_{t-1}\right) $ are computed
by the well-established technique of the standard Kalman filter (KF), which
in this case is exact and optimal (\citealp{kalman1960}). Working in a
multiple-regime environment is more challenging because of the explosive
dimensionality of the problem.

Specifically, in a multiple-regime environment, exact estimation is
infeasible because the number of histories that a Kalman-type filter needs
to take into account increases exponentially with every time period. At any
given time $t$, a multiple-regime dynamic system can be in one of $h$
possible regimes, each corresponding to a realisation of $h$ mutually
exclusive and exhaustive random events. Denote the sequence of realised
regimes from the beginning of observations up to time $t$ by $\mathcal{J}%
_{t} $:%
\begin{equation*}
\mathcal{J}_{t}=\{s_{1},s_{2},...,s_{t-1},s_{t}\} \in \mathbb{H}_{t,t}
\end{equation*}%
where $\mathbb{H}_{N,t}$ is the set of all possible histories of length $N$
that end at period $t$. There are $h^{t}$ possible mutually exclusive and
exhaustive histories up to time $t$. Using the total probability theorem,
the conditional pdf at time $t$ is obtained as a Gaussian mixture\ with the
number of terms equal to $h^{t}$: 
\begin{equation*}
f\left( y_{t+1}\mid Y_{t}\right) =\sum_{\mathcal{J}_{t}}f\left( y_{t+1}\mid 
\mathcal{J}_{t},Y_{t}\right) \Pr \left[ \mathcal{J}_{t}\mid Y_{t}\right] .
\end{equation*}

The probability of a given regime history is computed using Bayes formula:%
\begin{eqnarray*}
\Pr \left[ \mathcal{J}_{t}\mid Y_{t}\right] &=&\Pr \left[ \mathcal{J}%
_{t}\mid y_{t},Y_{t-1}\right] =\frac{f\left( y_{t}\mid \mathcal{J}%
_{t},Y_{t-1}\right) \Pr \left[ \mathcal{J}_{t}\mid Y_{t-1}\right] }{f\left(
y_{t}\mid Y_{t-1}\right) } \\
&=&\frac{f\left( y_{t}\mid \mathcal{J}_{t},Y_{t-1}\right) \Pr \left[ s_{t},%
\mathcal{J}_{t-1}\mid Y_{t-1}\right] }{f\left( y_{t}\mid Y_{t-1}\right) } \\
&=&\frac{f\left( y_{t}\mid \mathcal{J}_{t},Y_{t-1}\right) \Pr \left[
s_{t}\mid \mathcal{J}_{t-1},Y_{t-1}\right] }{f\left( y_{t}\mid
Y_{t-1}\right) }\Pr \left[ \mathcal{J}_{t-1}\mid Y_{t-1}\right]
\end{eqnarray*}%
When the regime switches have Markov property, $\Pr \left[ s_{t}\mid 
\mathcal{J}_{t-1},Y_{t-1}\right] =\func{Pr}\left[ s_{t}\mid s_{t-1}\right]
=Q(s_{t-1},s_{t})$, which simplifies the second term in the numerator.
However, conditioning on the entire past history is still needed for the
last term even if the regimes follow a Markov process.

In practice, one has to resort to some approximation. In this sense, all
practical multiple-regime filters are approximate and, therefore,
suboptimal. One popular approach involves merging two or more histories into
one. A version of this approach is well known in economic applications as
the Kim and Nelson filter. We focus on this framework and study two families
of filters with different mechanisms of approximation.

In the next section, we present the Generalised Pseudo-Bayesian (GPB)
filters, and Interacting Multiple Models (IMM) filters of arbitrary order
(length of tracked histories) $N.$

\subsection{Two Practical Families of Filters}

We begin with the GPB(N) family. It includes the Kim and Nelson filter as a
special case of GPB(2), as one can see from comparison of the expositions in %
\citet{Kim1994} and \citet{BarShalom2001}. Following commonly used
notations, the GPB(N) filter uses information from the previous $N$ periods,
including the current one. Thus, GPB(1) ignores past history and uses
current period information only, GPB(2) incorporates information from the
current period and one immediately preceding period, and so on. The IMM
algorithm is conceptually different from the GPB in the way it combines past
histories. The version of IMM developed in \citet{BlomBarshalom1988}
corresponds to IMM(1); we refer to it as canonical IMM.

One would expect that a higher $N$ leads to increased accuracy at the cost
of a larger amount of computations. We investigate relative accuracy and
speed for different $N$ within each family. Another interesting question is
whether the canonical IMM outperforms the KN filter in accuracy and speed in
a prototypical macroeconomic application.

In this section, we present the GPB(N) and the IMM(N) algorithms in turn,
using uniform notations for the entire GPB family. Where relevant, we will
remind the reader that KN is the same as GPB(2).

\subsubsection{Preliminaries}

Let $\mathcal{H}_{t}$ denote the history of regimes in $N$ consecutive
periods ending with period $t$,%
\begin{equation*}
\mathcal{H}_{t}:=\{s_{t-N+1},...,s_{t-1},s_{t}\} \in \mathbb{H}_{N,t},
\end{equation*}%
and let $\mathcal{C}_{t}$ be the `collapsed'\ history, defined as%
\begin{equation*}
\mathcal{C}_{t}:=\{s_{t-N+2},...,s_{t}\} \in \mathbb{H}_{N-1,t}.
\end{equation*}

Hence%
\begin{equation*}
\mathcal{H}_{t}=\{ \mathcal{C}_{t-1},s_{t}\}=\{s_{t-N+1},\mathcal{C}_{t}\},
\end{equation*}%
and%
\begin{eqnarray*}
\mathcal{H}_{t-1} &=&\{s_{t-N},...,s_{t-2},s_{t-1}\} \in \mathbb{H}_{N,t-1},
\\
\mathcal{H}_{t-1}\cup \mathcal{H}_{t}
&=&\{s_{t-N},...,s_{t-2},s_{t-1},s_{t}\} \in \mathbb{H}_{N+1,t}.
\end{eqnarray*}

Let%
\begin{equation*}
\mu _{t\mid t}^{(\mathcal{H}_{t})}:=\func{Pr}\left[ \mathcal{H}_{t}\mid Y_{t}%
\right]
\end{equation*}%
be the probability of realisation of a particular history $\mathcal{H}_{t}$
conditional on information at time $t$.

\subsubsection{Family of GPB Filters}

The GPB algorithm \ of order $N$, denoted GPB(N), takes into account all $%
h^{N}$ possible histories of the fixed length $N$, finishing at the current
time period. It is implemented as follows.

Define 
\begin{equation*}
\mu _{t\mid t}^{(\mathcal{C}_{t})}:=\func{Pr}\left[ \mathcal{C}_{t}\mid Y_{t}%
\right] =\sum_{s_{t-N+1}=1}^{h}\mu _{t\mid t}^{(\mathcal{H}_{t})}
\end{equation*}%
as the probability of the collapsed history, $\mathcal{C}_{t}$, conditional
on information at time $t$.

\begin{algorithm}
GPB(N) Algorithm
\end{algorithm}

\textbf{Step 0. }Start at $t=1$. Initialise $h^{N-1}$ versions of the state
vector $\alpha _{t-1\mid t-1}^{\left( \mathcal{C}_{t-1}\right) },$ the MSE
matrix $P_{t-1\mid t-1}^{\left( \mathcal{C}_{t-1}\right) },$ and
probabilities $\mu _{t-1\mid t-1}^{\left( \mathcal{C}_{t-1}\right) }$, $%
\mathcal{C}_{t-1}\in \mathbb{H}_{N-1,t-1}.$

\textbf{Step 1. }Compute $h^{N}$ standard KF forecasts to obtain%
\begin{equation*}
\begin{tabular}{cc}
\hline\hline
$\text{GPB(N) Forecast}$ & $\text{GPB(N) Update}$ \\ \hline
\multicolumn{1}{l}{$\alpha _{t\mid t-1}^{\left( \mathcal{H}_{t}\right)
}=c_{\alpha ,s_{t}}+T_{s_{t}}\alpha _{t-1\mid t-1}^{\left( \mathcal{C}%
_{t-1}\right) },$} & \multicolumn{1}{l}{$\alpha _{t\mid t}^{\left( \mathcal{H%
}_{t}\right) }=\alpha _{t\mid t-1}^{\left( \mathcal{H}_{t}\right) }+P_{t\mid
t-1}^{\left( \mathcal{H}_{t}\right) }Z_{s_{t}}^{\prime }\left[ F_{t\mid
t-1}^{\left( \mathcal{H}_{t}\right) }\right] ^{-1}v_{t\mid t-1}^{\left( 
\mathcal{H}_{t}\right) },$} \\ 
\multicolumn{1}{l}{$P_{t\mid t-1}^{\left( \mathcal{H}_{t}\right)
}=T_{s_{t}}P_{t-1\mid t-1}^{\left( \mathcal{C}_{t-1}\right)
}T_{s_{t}}^{\prime }+R_{s_{t}}R_{s_{t}}^{\prime },$} & \multicolumn{1}{l}{$%
P_{t\mid t}^{\left( \mathcal{H}_{t}\right) }=\left( I-P_{t\mid t-1}^{\left( 
\mathcal{H}_{t}\right) }Z_{s_{t}}^{\prime }\left[ F_{t\mid t-1}^{\left( 
\mathcal{H}_{t}\right) }\right] ^{-1}Z_{s_{t}}\right) P_{t\mid t-1}^{\left( 
\mathcal{H}_{t}\right) },$} \\ \hline\hline
\end{tabular}%
\end{equation*}%
where%
\begin{equation*}
\begin{array}{ccc}
v_{t\mid t-1}^{\left( \mathcal{H}_{t}\right) }=y_{t}-Z_{s_{t}}\alpha _{t\mid
t-1}^{\left( \mathcal{H}_{t}\right) }-c_{y,s_{t}}, & F_{t\mid t-1}^{\left( 
\mathcal{H}_{t}\right) }=Z_{s_{t}}P_{t\mid t-1}^{\left( \mathcal{H}%
_{t}\right) }Z_{s_{t}}^{\prime }+H_{s_{t}}, & H_{s_{t}}=g_{s_{t}}g_{s_{t}}^{%
\prime }.%
\end{array}%
\end{equation*}%
Compute the associated likelihood 
\begin{equation*}
\Lambda _{t}^{\left( \mathcal{H}_{t}\right) }=f\left( y_{t}\mid \mathcal{H}%
_{t},Y_{t-1}\right) =(2\pi )^{-p/2}\left \vert F_{t\mid t-1}^{\left( 
\mathcal{H}_{t}\right) }\right \vert ^{-1/2}\exp \left( -\frac{1}{2}v_{t\mid
t-1}^{\left( \mathcal{H}_{t}\right) \prime }\left[ F_{t\mid t-1}^{\left( 
\mathcal{H}_{t}\right) }\right] ^{-1}v_{t\mid t-1}^{\left( \mathcal{H}%
_{t}\right) }\right) .
\end{equation*}

\textbf{Step 2. }Compute probabilities $\mu _{t}^{(\mathcal{H}_{t})}$
according to\footnote{%
This procedure is a generalised version of the \citet{Hamilton1989} filter.} 
\begin{eqnarray*}
\mu _{t\mid t}^{(\mathcal{H}_{t})} &=&\Pr \left[ \mathcal{H}_{t}\mid Y_{t}%
\right] =\func{Pr}\left[ \mathcal{H}_{t}\mid Y_{t-1},y_{t}\right] \\
&=&\frac{f\left( y_{t},\mathcal{H}_{t}\mid Y_{t-1}\right) }{f\left(
y_{t}\mid Y_{t-1}\right) }=\frac{f\left( y_{t}\mid \mathcal{H}%
_{t},Y_{t-1}\right) \func{Pr}\left[ \mathcal{H}_{t}\mid Y_{t-1}\right] }{%
f\left( y_{t}\mid Y_{t-1}\right) } \\
&=&\frac{f\left( y_{t}\mid \mathcal{H}_{t},Y_{t-1}\right) \func{Pr}\left[
s_{t}\mid \mathcal{C}_{t-1},Y_{t-1}\right] }{f\left( y_{t}\mid
Y_{t-1}\right) }\Pr \left[ \mathcal{C}_{t-1}\mid Y_{t-1}\right] \\
&\simeq &\frac{\Lambda _{t}^{\left( \mathcal{H}_{t}\right) }Q\left(
s_{t-1},s_{t}\right) }{\sum_{\mathcal{H}_{t}}\Lambda _{t}^{\left( \mathcal{H}%
_{t}\right) }Q\left( s_{t-1},s_{t}\right) \mu _{t-1\mid t-1}^{(\mathcal{C}%
_{t-1})}}\mu _{t-1\mid t-1}^{(\mathcal{C}_{t-1})},
\end{eqnarray*}%
where in the last line we used approximation 
\begin{equation}
\func{Pr}\left[ s_{t}\mid \mathcal{C}_{t-1},Y_{t-1}\right] \simeq \func{Pr}%
\left[ s_{t}\mid s_{t-1}\right] =Q\left( s_{t-1},s_{t}\right) ,
\label{Markov}
\end{equation}%
and 
\begin{eqnarray}
f\left( y_{t}\mid Y_{t-1}\right) &=&\sum_{\mathcal{H}_{t}}f\left( y_{t}\mid
Y_{t-1},\mathcal{H}_{t}\right) \Pr \left[ \mathcal{H}_{t}\mid Y_{t-1}\right]
\notag \\
&=&\sum_{\mathcal{H}_{t}}f\left( y_{t}\mid Y_{t-1},\mathcal{H}_{t}\right) 
\func{Pr}\left[ s_{t}\mid \mathcal{C}_{t-1},Y_{t-1}\right] \Pr \left[ 
\mathcal{C}_{t-1}\mid Y_{t-1}\right]  \notag \\
&\simeq &\sum_{\mathcal{H}_{t}}\Lambda _{t}^{\left( \mathcal{H}_{t}\right)
}Q\left( s_{t-1},s_{t}\right) \mu _{t-1\mid t-1}^{(\mathcal{C}_{t-1})}.
\label{LGPB2}
\end{eqnarray}%
where the sum is taken over all possible realisations of $\mathcal{H}_{t}.$

\textbf{Step 3. }Next, $h^{N}$ KF outputs \ are \ merged \ into $h^{N-1}$ \
using conditional probabilities $\Pr [s_{t-N+1}|Y_{t},\mathcal{C}_{t}]$
computed as: 
\begin{equation*}
\Pr [s_{t-N+1}|Y_{t},\mathcal{C}_{t}]=\frac{\Pr \left( s_{t-N+1},\mathcal{C}%
_{t}|Y_{t}\right) }{\Pr \left( \mathcal{C}_{t}|Y_{t}\right) }=\frac{\Pr
\left( \mathcal{H}_{t}|Y_{t}\right) }{\Pr \left( \mathcal{C}%
_{t}|Y_{t}\right) }=\frac{\mu _{t\mid t}^{(\mathcal{H}_{t})}}{%
\sum_{s_{t+1-N}=1}^{h}\mu _{t\mid t}^{(\mathcal{H}_{t})}}.
\end{equation*}

Thus,%
\begin{eqnarray}
\alpha _{t\mid t}^{\left( \mathcal{C}_{t}\right) }
&=&\sum_{s_{t-N+1}=1}^{h}\Pr [s_{t-N+1}|Y_{t},\mathcal{C}_{t}]\alpha _{t\mid
t}^{\left( \mathcal{H}_{t}\right) },  \label{x_clpsd} \\
P_{t\mid t}^{\left( \mathcal{C}_{t}\right) } &=&\sum_{s_{t-N+1}=1}^{h}\Pr
[s_{t-N+1}|Y_{t},\mathcal{C}_{t}]\left \{ P_{t\mid t}^{\left( \mathcal{H}%
_{t}\right) }+\left( \alpha _{t\mid t}^{\left( \mathcal{C}_{t}\right)
}-\alpha _{t\mid t}^{\left( \mathcal{H}_{t}\right) }\right) \left( \alpha
_{t\mid t}^{\left( \mathcal{C}_{t}\right) }-\alpha _{t\mid t}^{\left( 
\mathcal{H}_{t}\right) }\right) ^{\prime }\right \}  \label{p_clpsd} \\
\mu _{t\mid t}^{\left( \mathcal{C}_{t}\right) } &=&\Pr \left[ \mathcal{C}%
_{t}\mid Y_{t}\right] =\sum_{s_{t-N+1}=1}^{h}\Pr \left[ \mathcal{H}_{t}\mid
Y_{t}\right] =\sum_{s_{t-N+1}=1}^{h}\mu _{t\mid t}^{(\mathcal{H}_{t})}.
\label{mu_clpsd}
\end{eqnarray}

Updated $\alpha _{t\mid t}^{\left( \mathcal{C}_{t}\right) },$ $P_{t\mid
t}^{\left( \mathcal{C}_{t}\right) }$ and $\mu _{t\mid t}^{\left( \mathcal{C}%
_{t}\right) }$ serve as initialisations for the next time period ($%
t=2,3,...,n$) in a recursion at Step 1.

The $t$-increment likelihood%
\begin{equation*}
L_{t}=\log f\left( y_{t}\mid Y_{t-1}\right) ,
\end{equation*}%
is computed using (\ref{LGPB2}) as part of the filter algorithm.

To \ continue \ the \ recursion \ to \ the \ end \ of \ the \ sample \ we \
only \ need \ to \ compute $\left \{ \alpha _{t\mid t}^{\left( \mathcal{C}%
_{t}\right) },P_{t\mid t}^{\left( \mathcal{C}_{t}\right) },\mu _{t\mid
t}^{\left( \mathcal{C}_{t}\right) }\right \} _{\left( \mathcal{C}_{t}\right)
\in \mathbb{H}_{N-1,t}}$\ at each step of the algorithm. These quantities
are also used to compute the state vectors and the MSE matrices $\left \{
\alpha _{t\mid t},P_{t\mid t}\right \} _{t=1:n}$ for each time $t$, and the
probability $\mu _{t}^{\left( s_{t}\right) }$ of the system being in regime $%
s_{t}$ conditional on information at time $t$: 
\begin{eqnarray}
\alpha _{t\mid t} &=&\sum_{\mathcal{C}_{t}}\Pr \left[ \mathcal{C}_{t}\mid
Y_{t}\right] \alpha _{t\mid t}^{\left( \mathcal{C}_{t}\right) }=\sum_{%
\mathcal{C}_{t}}\mu _{t\mid t}^{\left( \mathcal{C}_{t}\right) }\alpha
_{t\mid t}^{\left( \mathcal{C}_{t}\right) },  \label{fusion x} \\
P_{t\mid t} &=&\sum_{\mathcal{C}_{t}}\Pr \left[ \mathcal{C}_{t}\mid Y_{t}%
\right] \left \{ P_{t\mid t}^{\left( \mathcal{C}_{t}\right) }+\left( \alpha
_{t\mid t}-\alpha _{t\mid t}^{\left( \mathcal{C}_{t}\right) }\right) \left(
\alpha _{t\mid t}-\alpha _{t\mid t}^{\left( \mathcal{C}_{t}\right) }\right)
^{\prime }\right \}  \label{fusion p} \\
&=&\sum_{\mathcal{C}_{t}}\mu _{t\mid t}^{\left( \mathcal{C}_{t}\right)
}\left \{ P_{t\mid t}^{\left( \mathcal{C}_{t}\right) }+\left( \alpha _{t\mid
t}-\alpha _{t\mid t}^{\left( \mathcal{C}_{t}\right) }\right) \left( \alpha
_{t\mid t}-\alpha _{t\mid t}^{\left( \mathcal{C}_{t}\right) }\right)
^{\prime }\right \} ,  \notag \\
\mu _{t\mid t}^{\left( s_{t}\right) } &=&\Pr [s_{t}\mid Y_{t}]=\sum_{%
\mathcal{C}_{t-1}}\Pr [\mathcal{H}_{t}\mid Y_{t}]=\sum_{\mathcal{C}%
_{t-1}}\mu _{t\mid t}^{\left( \mathcal{H}_{t}\right) }.  \label{fusion mu}
\end{eqnarray}%
These objects can be computed outside of the recursion.

This algorithm works for any $N\geq 1$. Note that for $N=1$ we have $%
\mathcal{C}_{t}=\varnothing ,\mathcal{H}_{t}=s_{t}.$ This implies that the
GPB(1) algorithm considers possible regime histories only at their latest
instant and merges all preceding regime sequences into one, using \textit{%
common} initial conditions $\alpha _{t-1\mid t-1}$ at each time step.

\subsubsection{Family of IMM Filters}

IMM(N) maintains the dimensionality of histories at $h^{N}$ at every time
period. This is achieved by \textit{mixing} histories in every recursion of
the algorithm immediately after making a step forward in time, which would
otherwise result in an increase of dimension to $h^{N+1}.$ Effectively, at
every $t$, mixing replaces $h^{N+1}$ extended `exact' histories by $h^{N}$
reduced approximate histories weighted by probabilities of transition from
the earliest state, $s_{t-N}$, into the sequence of most recent states, $%
\{s_{t-N+1},...,s_{t}\}$. These $h^{N}$ histories are then filtered and
updated in the usual way.

A comparison of the IMM and the GPB shows that the dimension reduction is
performed at the different stages of the algorithms.\footnote{%
See Tables A1 and A2 in Appendix \ref{App Filters}.} In the IMM, mixing is
done after the state update and before the measurement update, and in the
GPB, collapsing is done after the state and measurement updates.\footnote{%
We use \textit{mixing} in the description of the IMM and \textit{collapsing}
in the description of the GPB following the convention in the literature.}

Let $\mathcal{Q}_{t|t-1}:=Q_{t|t-1}\otimes Q_{t-1|t-2}\otimes ...\otimes
Q_{t-N+1|t-N}$ denote the grand transition matrix of format $h^{N}\times
h^{N}.$

\begin{algorithm}
IMM(N) Algorithm.
\end{algorithm}

\textbf{Step 0. }Initialise $h^{N}$ versions of the state vector $\alpha
_{t-1\mid t-1}^{\left( \mathcal{H}_{t-1}\right) },$ the MSE matrix $%
P_{t-1\mid t-1}^{\left( \mathcal{H}_{t-1}\right) },$ and regime
probabilities $\mu _{t-1\mid t-1}^{(\mathcal{H}_{t-1})}$, $\mathcal{H}%
_{t-1}\in \mathbb{H}_{N,t-1}.$ Compute $\mathcal{Q}_{t|t-1}.$

\textbf{Step 1. }Compute the mixing probabilities defined as 
\begin{equation*}
\mu _{t-1\mid t-1}^{\left( \mathcal{H}_{t-1}\mid \mathcal{H}_{t}\right) }:=%
\func{Pr}\left[ \mathcal{H}_{t-1}\mid Y_{t-1},\mathcal{H}_{t}\right] .
\end{equation*}%
Note that%
\begin{eqnarray*}
\func{Pr}[\mathcal{H}_{t-1}\cup \mathcal{H}_{t}\mid Y_{t-1}] &=&\func{Pr}%
\left[ \mathcal{H}_{t-1}\mid Y_{t-1},\mathcal{H}_{t}\right] \Pr [\mathcal{H}%
_{t}\mid Y_{t-1}] \\
&=&\func{Pr}\left[ \mathcal{H}_{t}\mid \mathcal{H}_{t-1},Y_{t-1}\right] \Pr [%
\mathcal{H}_{t-1}\mid Y_{t-1}],
\end{eqnarray*}%
and 
\begin{equation*}
\Pr [\mathcal{H}_{t}\mid Y_{t-1}]=\sum_{\mathcal{H}_{t-1}}\Pr \left[ 
\mathcal{H}_{t}\mid \mathcal{H}_{t-1},Y_{t-1}\right] \Pr \left[ \mathcal{H}%
_{t-1}\mid Y_{t-1}\right] .
\end{equation*}%
Therefore, 
\begin{eqnarray*}
\mu _{t-1\mid t-1}^{\left( \mathcal{H}_{t-1}\mid \mathcal{H}_{t}\right) } &=&%
\func{Pr}\left[ \mathcal{H}_{t-1}\mid Y_{t-1},\mathcal{H}_{t}\right] =\frac{%
\func{Pr}\left[ \mathcal{H}_{t}\mid \mathcal{H}_{t-1},Y_{t-1}\right] \Pr [%
\mathcal{H}_{t-1}\mid Y_{t-1}]}{\Pr [\mathcal{H}_{t}\mid Y_{t-1}]} \\
&=&\frac{\func{Pr}\left[ \mathcal{H}_{t}\mid \mathcal{H}_{t-1},Y_{t-1}\right]
\Pr [\mathcal{H}_{t-1}\mid Y_{t-1}]}{\sum_{\mathcal{H}_{t-1}}\Pr \left[ 
\mathcal{H}_{t}\mid \mathcal{H}_{t-1},Y_{t-1}\right] \Pr \left[ \mathcal{H}%
_{t-1}\mid Y_{t-1}\right] } \\
&\simeq &\frac{\mathcal{Q}_{t|t-1}(\mathcal{H}_{t-1},\mathcal{H}_{t})\mu
_{t-1\mid t-1}^{(\mathcal{H}_{t-1})}}{\sum_{\mathcal{H}_{t-1}}\mathcal{Q}%
_{t|t-1}(\mathcal{H}_{t-1},\mathcal{H}_{t})\mu _{t-1\mid t-1}^{(\mathcal{H}%
_{t-1})}}
\end{eqnarray*}

\textbf{Step 2. }Compute the mixed\textit{\ }state vectors and MSE matrices
for each history: 
\begin{eqnarray}
\hat{\alpha}_{t-1\mid t-1}^{\left( \ast ,\mathcal{H}_{t}\right) } &=&\sum_{%
\mathcal{H}_{t-1}}\func{Pr}\left[ \mathcal{H}_{t-1}\mid Y_{t-1},\mathcal{H}%
_{t}\right] \alpha _{t-1\mid t-1}^{\left( \mathcal{H}_{t-1}\right) }=\sum_{%
\mathcal{H}_{t-1}}\mu _{t-1\mid t-1}^{\left( \mathcal{H}_{t-1}\mid \mathcal{H%
}_{t}\right) }\alpha _{t-1\mid t-1}^{\left( \mathcal{H}_{t-1}\right) },
\label{x_mixed} \\
\hat{P}_{t-1\mid t-1}^{\left( \ast ,\mathcal{H}_{t}\right) } &=&\sum_{%
\mathcal{H}_{t-1}}\func{Pr}\left[ \mathcal{H}_{t-1}\mid Y_{t-1},\mathcal{H}%
_{t}\right]  \label{P-mixed} \\
&&\times \left \{ P_{t-1\mid t-1}^{\left( s_{t-1}\right) }+\left( \alpha
_{t-1\mid t-1}^{\left( \mathcal{H}_{t-1}\right) }-\hat{\alpha}_{t-1\mid
t-1}^{\left( \ast ,\mathcal{H}_{t-1}\right) }\right) \left( \alpha _{t-1\mid
t-1}^{\left( \mathcal{H}_{t-1}\right) }-\hat{\alpha}_{t-1\mid t-1}^{\left(
\ast ,\mathcal{H}_{t-1}\right) }\right) ^{\prime }\right \}  \notag \\
&=&\sum_{\mathcal{H}_{t-1}}\mu _{t-1\mid t-1}^{\left( \mathcal{H}_{t-1}\mid 
\mathcal{H}_{t}\right) }\left \{ P_{t-1\mid t-1}^{\left( \mathcal{H}%
_{t-1}\right) }+\left( \alpha _{t-1\mid t-1}^{\left( \mathcal{H}%
_{t-1}\right) }-\hat{\alpha}_{t-1\mid t-1}^{\left( \ast ,\mathcal{H}%
_{t-1}\right) }\right) \left( \alpha _{t-1\mid t-1}^{\left( \mathcal{H}%
_{t-1}\right) }-\hat{\alpha}_{t-1\mid t-1}^{\left( \ast ,\mathcal{H}%
_{t-1}\right) }\right) ^{\prime }\right \} .  \notag
\end{eqnarray}%
Here, $\hat{\alpha}_{t-1\mid t-1}^{\left( \ast ,\mathcal{H}_{t}\right) }$ is
conditional on a particular sequence of regimes, and it is computed for all
possible sequences, or histories, in $\mathbb{H}_{N,t}.$ When computing $%
\hat{P}_{t-1\mid t-1}^{\left( \ast ,\mathcal{H}_{t}\right) }$\ in (\ref%
{P-mixed}) we take the sum over all possible sequences $\mathcal{H}_{t-1}$
of length $N$ ending at $t-1$ such that they overlap with $\mathcal{H}_{t}$
between times $t-N+1$ and $t-1.$\ Note that once states and MSEs are mixed,
the memory of $s_{t-N}$ is `cleared', so we put an asterisk in place of the
now non-existent index $s_{t-N}=\mathcal{H}_{t-1}\backslash (\mathcal{H}%
_{t}\cap \mathcal{H}_{t-1}).$ This reduces the dimensionality from $h^{N+1}$
to $h^{N}.$

\textbf{Step 3. }For each history compute the standard KF to obtain%
\begin{equation*}
\begin{tabular}{cc}
\hline\hline
$\text{IMM(N) Forecast}$ & $\text{IMM(N) Update}$ \\ \hline
\multicolumn{1}{l}{$\alpha _{t\mid t-1}^{\left( \mathcal{H}_{t}\right)
}=c_{\alpha ,s_{t}}+T_{s_{t}}\hat{\alpha}_{t-1\mid t-1}^{\left( \ast ,%
\mathcal{H}_{t}\right) },$} & \multicolumn{1}{l}{$\alpha _{t\mid t}^{\left( 
\mathcal{H}_{t}\right) }=\alpha _{t\mid t-1}^{\left( \mathcal{H}_{t}\right)
}+P_{t\mid t-1}^{\left( \mathcal{H}_{t}\right) }Z_{s_{t}}^{\prime }\left[
F_{t\mid t-1}^{\left( \mathcal{H}_{t}\right) }\right] ^{-1}v_{t\mid
-1}^{\left( \mathcal{H}_{t}\right) },$} \\ 
\multicolumn{1}{l}{$P_{t\mid t-1}^{\left( \mathcal{H}_{t}\right) }=T_{s_{t}}%
\hat{P}_{t-1\mid t-1}^{\left( \ast ,\mathcal{H}_{t}\right)
}T_{s_{t}}^{\prime }+R_{s_{t}}R_{s_{t}}^{\prime },$} & \multicolumn{1}{l}{$%
P_{t\mid t}^{\left( \mathcal{H}_{t}\right) }=\left( I-P_{t\mid t-1}^{\left( 
\mathcal{H}_{t}\right) }Z_{s_{t}}^{\prime }\left[ F_{t\mid t-1}^{\left( 
\mathcal{H}_{t}\right) }\right] ^{-1}Z_{s_{t}}\right) P_{t\mid t-1}^{\left( 
\mathcal{H}_{t}\right) },$} \\ \hline\hline
\end{tabular}%
\end{equation*}%
where%
\begin{equation*}
\begin{array}{ccc}
v_{t\mid t-1}^{\left( \mathcal{H}_{t}\right) }=y_{t}-Z_{s_{t}}\alpha _{t\mid
t-1}^{\left( \mathcal{H}_{t}\right) }-c_{y,s_{t}}, & F_{t\mid t-1}^{\left( 
\mathcal{H}_{t}\right) }=Z_{s_{t}}P_{t\mid t-1}^{\left( \mathcal{H}%
_{t}\right) }Z_{s_{t}}^{\prime }+H_{s_{t}}, & H_{s_{t}}=g_{s_{t}}g_{s_{t}}^{%
\prime }.%
\end{array}%
\end{equation*}%
Compute the associated likelihood: 
\begin{align*}
\Lambda _{t}^{\left( \mathcal{H}_{t}\right) }& =f\left( y_{t}\mid \mathcal{H}%
_{t},Y_{t-1}\right) \\
& =(2\pi )^{-p/2}\left \vert F_{t\mid t-1}^{\left( \mathcal{H}_{t}\right)
}\right \vert ^{-1/2}\exp \left( -\frac{1}{2}v_{t\mid t-1}^{\left( \mathcal{H%
}_{t}\right) \prime }\left[ F_{t\mid t-1}^{\left( \mathcal{H}_{t}\right) }%
\right] ^{-1}v_{t\mid t-1}^{\left( \mathcal{H}_{t}\right) }\right) .
\end{align*}

\textbf{Step 4. }Update the probabilities $\mu _{t\mid t}^{(\mathcal{H}%
_{t})} $, 
\begin{eqnarray*}
\mu _{t\mid t}^{(\mathcal{H}_{t})} &=&\func{Pr}\left[ \mathcal{H}_{t}\mid
Y_{t}\right] =\func{Pr}\left[ \mathcal{H}_{t}\mid Y_{t-1},y_{t}\right] \\
&=&\frac{f\left( y_{t},\mathcal{H}_{t}\mid Y_{t-1}\right) }{f\left(
y_{t}\mid Y_{t-1}\right) }=\frac{f\left( y_{t}\mid \mathcal{H}%
_{t},Y_{t-1}\right) \func{Pr}\left[ \mathcal{H}_{t}\mid Y_{t-1}\right] }{%
f\left( y_{t}\mid Y_{t-1}\right) } \\
&=&\frac{f\left( y_{t}\mid \mathcal{H}_{t},Y_{t-1}\right) \sum_{\mathcal{H}%
_{t-1}}\Pr \left[ \mathcal{H}_{t}\mid \mathcal{H}_{t-1},Y_{t-1}\right] \Pr %
\left[ \mathcal{H}_{t-1}\mid Y_{t-1}\right] }{f\left( y_{t}\mid
Y_{t-1}\right) } \\
&\simeq &\frac{\Lambda _{t}^{\left( \mathcal{H}_{t}\right) }\sum_{\mathcal{H}%
_{t-1}}\mathcal{Q}_{t|t-1}(\mathcal{H}_{t-1},\mathcal{H}_{t})\mu _{t-1\mid
t-1}^{(\mathcal{H}_{t-1})}}{\sum_{\mathcal{H}_{t}}\Lambda _{t}^{\left( 
\mathcal{H}_{t}\right) }\sum_{\mathcal{H}_{t-1}}\mathcal{Q}_{t|t-1}(\mathcal{%
H}_{t-1},\mathcal{H}_{t})\mu _{t-1\mid t-1}^{(\mathcal{H}_{t-1})}},
\end{eqnarray*}%
where in the last line we used 
\begin{eqnarray}
f\left( y_{t}\mid Y_{t-1}\right) &=&\sum_{\mathcal{H}_{t}}f\left( y_{t}\mid
Y_{t-1},\mathcal{H}_{t}\right) \Pr \left[ \mathcal{H}_{t}\mid Y_{t-1}\right]
\notag \\
&=&\sum_{\mathcal{H}_{t}}f\left( y_{t}\mid Y_{t-1},\mathcal{H}_{t}\right) 
\notag \\
&&\times \sum_{\mathcal{H}_{t-1}}\left( \Pr \left[ \mathcal{H}_{t}\mid 
\mathcal{H}_{t-1},Y_{t-1}\right] \Pr \left[ \mathcal{H}_{t-1}\mid Y_{t-1}%
\right] \right)  \notag \\
&\simeq &\sum_{s\mathcal{H}_{t}}\Lambda _{t}^{\left( \mathcal{H}_{t}\right)
}\sum_{\mathcal{H}_{t-1}}\mathcal{Q}_{t|t-1}(\mathcal{H}_{t-1},\mathcal{H}%
_{t})\mu _{t-1\mid t-1}^{(\mathcal{H}_{t-1})}  \label{LIMM1}
\end{eqnarray}

The KF outputs and the updated probabilities, $\left \{ \alpha _{t\mid
t}^{\left( \mathcal{H}_{t}\right) },P_{t\mid t}^{\left( \mathcal{H}%
_{t}\right) },\mu _{t\mid t}^{(\mathcal{H}_{t})}\right \} $, serve as
initialisations for the next time step in a recursion. The $t$-increment
likelihood is%
\begin{equation*}
L_{t}=\log f\left( y_{t}\mid Y_{t-1}\right)
\end{equation*}%
and it is computed using formula (\ref{LIMM1}) as part of the filtering
algorithm. The state vectors, the MSE matrices, and updated regime
probabilities $\left \{ \alpha _{t\mid t},P_{t\mid t},\mu _{t\mid t}^{\left(
s_{t}\right) }\right \} _{t=1:n}$ for each time $t$ are computed using the
same formulae (\ref{fusion x})-(\ref{fusion mu}) as in the GPB algorithm.

\subsection{Smoothing}

After estimating the states and regime probabilities through forward
recursion, we can improve the inference on $s_{t}$ and $\alpha _{t}$ using
the information from the entire sample. This process, known as smoothing, is
usually conducted by backward recursion. However, since smoothing algorithms
employ approximations, the smoothed estimates may not necessarily be more
precise than the filtered and updated estimates for every time period. Thus,
the overall performance of smoothing algorithms is evaluated using
simulations.

A practical algorithm for computing smoothed \textit{probabilities} for a
KN-GPB(2) filter is detailed in \citet{Kim1994}. In this paper we generalise
it for arbitrary length of history to make it applicable to filters of
higher order.

Smoothing \textit{state vectors}, even in single-regime models, can be
challenging, and is even more so in models with multiple regimes. %
\citet{Kim1994} introduces an algorithm specifically designed to work with
the KN-GPB(2) filter. However, this algorithm is computationally unstable as
it requires inverting large auxiliary matrices. Existing smoothers developed
for engineering applications typically exploit measurement errors and
require the invertibility of matrix $H_{s_{t}}=g_{s_{t}}g_{s_{t}}^{\prime }$%
, a condition often not met in economic applications.\ In this section, we
adapt a single-regime smoothing algorithm proposed by %
\citet{DurbinKoopman2012}, based on \citet{DeJong1988}, for use in a
Markov-switching multiple-regime model with arbitrary history lengths. This
adapted algorithm requires only matrix inversions that are part of the
corresponding filter and would have been computed at the filtering stage.

\subsubsection{Smoothed Probabilities}

Smoothed probabilities are computed using the total probability theorem. The
probability of history $\mathcal{H}_{t}$ conditional on the information
contained in the full sample, $Y_{n}$, can be written as%
\begin{equation*}
\mu _{t\mid n}^{(\mathcal{H}_{t})}:=\Pr [\mathcal{H}_{t}\mid
Y_{n}]=\sum_{s_{t+1}}\Pr [\mathcal{H}_{t},s_{t+1}\mid Y_{n}].
\end{equation*}%
To compute $\Pr [\mathcal{H}_{t},s_{t+1}\mid Y_{n}]$, we use the following
approximation: 
\begin{align*}
\Pr [\mathcal{H}_{t},s_{t+1}\mid Y_{n}]& =\func{Pr}\left[ s_{t+1}\mid Y_{n}%
\right] \func{Pr}\left[ \mathcal{H}_{t}\mid s_{t+1},Y_{n}\right] \\
& \simeq \func{Pr}\left[ s_{t+1}\mid Y_{n}\right] \func{Pr}\left[ \mathcal{H}%
_{t}\mid s_{t+1},Y_{t}\right] \\
& =\func{Pr}\left[ s_{t+1}\mid Y_{n}\right] \frac{\func{Pr}\left[ \mathcal{H}%
_{t},s_{t+1}\mid Y_{t}\right] }{\func{Pr}\left[ s_{t+1}\mid Y_{t}\right] } \\
& =\func{Pr}\left[ s_{t+1}\mid Y_{n}\right] \frac{\func{Pr}\left[ \mathcal{H}%
_{t}\mid Y_{t}\right] \func{Pr}\left[ s_{t+1}\mid \mathcal{H}_{t},Y_{t}%
\right] }{\sum_{\mathcal{H}_{t}}\left( \Pr \left[ s_{t+1}\mid \mathcal{H}%
_{t},Y_{t}\right] \Pr \left[ \mathcal{H}_{t}\mid Y_{t}\right] \right) } \\
& =\mu _{t+1\mid n}^{\left( s_{t+1}\right) }\frac{\mu _{t\mid t}^{(\mathcal{H%
}_{t})}Q\left( s_{t},s_{t+1}\right) }{\sum_{\mathcal{H}_{t}}Q\left(
s_{t},s_{t+1}\right) \mu _{t\mid t}^{(\mathcal{H}_{t})}}
\end{align*}%
Upon substitution, we get%
\begin{equation}
\mu _{t\mid n}^{(\mathcal{H}_{t})}\simeq \sum_{s_{t+1}}\mu _{t+1\mid
n}^{\left( s_{t+1}\right) }\frac{\mu _{t}^{(\mathcal{H}_{t})}Q\left(
s_{t},s_{t+1}\right) }{\sum_{\mathcal{H}_{t}}Q\left( s_{t},s_{t+1}\right)
\mu _{t\mid t}^{(\mathcal{H0}_{t})}}  \label{smp1}
\end{equation}

The smoothing algorithm is implemented by backward recursion as follows.

\begin{algorithm}
Smoothed Probabilities
\end{algorithm}

\textbf{Step 0.} Initialise $\mu _{n\mid n}^{\left( s_{n}\right) }=\func{Pr}%
\left[ s_{n}\mid Y_{n}\right] ,$ $s_{n}=1,...,h.$

\textbf{Step 1.} For $t=n-1$ use (\ref{smp1}) to compute the smoothed
probability of $\mu _{t\mid n}^{(\mathcal{H}_{t})}$ for history $\mathcal{H}%
_{t}$ .

\textbf{Step 2.} Compute smoothed probabilities of each regime:%
\begin{equation*}
\mu _{t\mid n}^{\left( s_{t}\right) }=\Pr \left[ s_{t}\mid Y_{n}\right]
=\sum_{\mathcal{C}_{t-1}}\Pr \left[ \mathcal{H}_{t}\mid Y_{n}\right] =\sum_{%
\mathcal{C}_{t-1}}\mu _{t\mid n}^{(\mathcal{H}_{t})}
\end{equation*}%
Use $\mu _{t\mid n}^{\left( s_{t}\right) }$ to initialise the algorithm for $%
t=n-2.$

\subsubsection{Smoothed Variables}

The smoother is based on the properties of the joint Gaussian distribution
of the forecast errors of the vector of latent state variables and the
estimation errors of the vector of observations produced by the filter.

By definition, smoothed state vectors and MSE matrices are: 
\begin{eqnarray}
\alpha _{t\mid n}^{\left( \mathcal{H}_{t}\right) } &=&\mathbb{E}\left[
\alpha _{t}^{\left( \mathcal{H}_{t}\right) }\mid Y_{n},\mathcal{H}_{t}\right]
,  \label{as} \\
P_{t\mid n}^{\left( \mathcal{H}_{t}\right) } &=&\mathbb{E}\left[ \left(
\alpha _{t}^{\left( \mathcal{H}_{t}\right) }-\alpha _{t\mid t-1}^{\left( 
\mathcal{H}_{t}\right) }\right) \left( \alpha _{t}^{\left( \mathcal{H}%
_{t}\right) }-\alpha _{t\mid t-1}^{\left( \mathcal{H}_{t}\right) }\right)
^{\prime }\mid Y_{n},\mathcal{H}_{t}\right] .  \label{ps}
\end{eqnarray}

Define the forecast error of the state vector at time $t$ with history $%
\mathcal{H}_{t}$ as 
\begin{equation}
\xi _{t\mid t-1}^{\left( \mathcal{H}_{t}\right) }:=\alpha _{t}-\alpha
_{t\mid t-1}^{\left( \mathcal{H}_{t}\right) }  \label{xi_def}
\end{equation}%
\qquad Then,%
\begin{equation}
P_{t\mid t-1}^{\left( \mathcal{H}_{t}\right) }=\mathbb{E}\left[ \xi _{t\mid
t-1}^{\left( \mathcal{H}_{t}\right) }\xi _{t\mid t-1}^{\left( \mathcal{H}%
_{t}\right) \prime }\mid Y_{t-1},\mathcal{H}_{t}\right] .
\end{equation}

Define the Kalman gain matrix:%
\begin{equation}
K_{t\mid t-1}^{\left( \mathcal{H}_{t}\right) }=P_{t\mid t-1}^{\left( 
\mathcal{H}_{t}\right) }Z_{s_{t}}^{\prime }\left[ F_{t\mid t-1}^{\left( 
\mathcal{H}_{t}\right) }\right] ^{-1}.  \label{Kgain}
\end{equation}

To calculate $\alpha _{t\mid n}^{\left( \mathcal{H}_{t}\right) }$ defined in
(\ref{as}), we split the history into two components at $t-1$ and use the
formula for the conditional mean of multivariate Gaussian distribution:%
\begin{eqnarray*}
\alpha _{t\mid n}^{\left( \mathcal{H}_{t}\right) } &=&\mathbb{E}\left[
\alpha _{t}\mid \mathcal{H}_{t},Y_{n}\right] =\mathbb{E}\left[ \alpha
_{t}\mid \mathcal{H}_{t},Y_{t-1},\{v_{k\mid k-1}\}_{k=t:n}\right] \\
&=&\alpha _{t\mid t-1}^{\left( \mathcal{H}_{t}\right) }+\sum_{k=t}^{n}%
\mathbb{E}\left[ \alpha _{t}v_{k\mid k-1}^{\prime }\mid Y_{t-1},\mathcal{H}%
_{t}\right] F_{k\mid k-1}^{-1}v_{k\mid k-1} \\
&=&\alpha _{t\mid t-1}^{\left( \mathcal{H}_{t}\right) }+\sum_{k=t}^{n}%
\mathbb{E}\left[ \left( \xi _{t\mid t-1}^{\left( \mathcal{H}_{t}\right)
}+\alpha _{t\mid t-1}^{\left( \mathcal{H}_{t}\right) }\right) \left( \xi
_{k\mid k-1}^{\prime }Z_{s_{k}}^{\prime }+\left[ g_{s_{k}}\varepsilon _{k}%
\right] ^{\prime }\right) \mid Y_{t-1},\mathcal{H}_{t}\right] F_{k\mid
k-1}^{-1}v_{k\mid k-1}
\end{eqnarray*}%
where we used%
\begin{equation*}
v_{t\mid t-1}^{\left( \mathcal{H}_{t}\right) }=y_{t}-Z_{s_{t}}\alpha _{t\mid
t-1}^{\left( \mathcal{H}_{t}\right) }-c_{y,s_{t}}=y_{t}-Z_{s_{t}}\left(
\alpha _{t}-\xi _{t\mid t-1}^{\left( \mathcal{H}_{t}\right) }\right)
-c_{y,s_{t}}=g_{s_{t}}\varepsilon _{t}+Z_{s_{t}}\xi _{t\mid t-1}^{\left( 
\mathcal{H}_{t}\right) }.
\end{equation*}%
So, finally, 
\begin{equation}
\alpha _{t\mid n}^{\left( \mathcal{H}_{t}\right) }=\alpha _{t\mid
t-1}^{\left( \mathcal{H}_{t}\right) }+\sum_{k=t}^{n}\mathbb{E}\left[ \xi
_{t\mid t-1}^{\left( \mathcal{H}_{t}\right) }\xi _{k\mid k-1}^{\prime }\mid
Y_{t-1},\mathcal{H}_{t}\right] Z_{s_{k}}^{\prime }F_{k\mid k-1}^{-1}v_{k\mid
k-1}.  \label{a_t_n}
\end{equation}

Similarly, the formula for conditional variance of multivariate Gaussian
distribution applied to (\ref{ps}) yields:%
\begin{eqnarray*}
P_{t\mid n}^{\left( \mathcal{H}_{t}\right) } &=&P_{t\mid t-1}^{\left( 
\mathcal{H}_{t}\right) }-\sum_{k=t}^{n}\mathbb{E}\left[ \alpha _{t}v_{k\mid
k-1}^{\prime }\mid Y_{t-1},\mathcal{H}_{t}\right] F_{k\mid k-1}^{-1}\mathbb{E%
}\left[ v_{k\mid k-1}\alpha _{t}^{\prime }\mid Y_{t-1},\mathcal{H}_{t}\right]
\\
&=&P_{t\mid t-1}^{\left( \mathcal{H}_{t}\right) }-\sum_{k=t}^{n}\mathbb{E}%
\left[ \left( \xi _{t\mid t-1}^{\left( \mathcal{H}_{t}\right) }+\alpha
_{t\mid t-1}^{\left( \mathcal{H}_{t}\right) }\right) \left( \xi _{k\mid
k-1}^{\prime }Z_{s_{k}}^{\prime }+\left[ g_{s_{k}}\varepsilon _{k}\right]
^{\prime }\right) \mid Y_{t-1},\mathcal{H}_{t}\right] F_{k\mid k-1}^{-1} \\
&&\times \mathbb{E}\left[ \left( Z_{s_{k}}\xi _{k\mid k-1}+\left[
g_{s_{k}}\varepsilon _{k}\right] \right) \left( \xi _{t\mid t-1}^{\left( 
\mathcal{H}_{t}\right) \prime }+\alpha _{t\mid t-1}^{\left( \mathcal{H}%
_{t}\right) \prime }\right) \mid Y_{t-1},\mathcal{H}_{t}\right] ,
\end{eqnarray*}%
so that%
\begin{equation}
P_{t\mid n}^{\left( \mathcal{H}_{t}\right) }=P_{t\mid t-1}^{\left( \mathcal{H%
}_{t}\right) }-\sum_{k=t}^{n}\mathbb{E}\left[ \xi _{t\mid t-1}^{\left( 
\mathcal{H}_{t}\right) }\xi _{k\mid k-1}^{\prime }\mid Y_{t-1},\mathcal{H}%
_{t}\right] Z_{s_{k}}^{\prime }F_{k\mid k-1}^{-1}Z_{s_{k}}\mathbb{E}\left[
\xi _{k\mid k-1}\xi _{t\mid t-1}^{\left( \mathcal{H}_{t}\right) \prime }\mid
Y_{t-1}\right]  \label{P_t_n}
\end{equation}

In expressions (\ref{a_t_n}) and (\ref{P_t_n}) we do not specify the \textit{%
future} regime sequences starting from $s_{t}$ under the summation. We
introduce them in the calculations of expectations $\mathbb{E}\left[ \cdot
\mid \cdot \right] $ for every step going backward, as shown later.

For now, we will need the following recursion for $\xi _{t\mid t-1}^{\left( 
\mathcal{H}_{t}\right) }.$ The recursion is slightly different for the two
families of filters.

\begin{lemma}
\label{Lemma}1. For GPB(N) filter%
\begin{equation}
\xi _{t\mid t-1}^{\left( \mathcal{H}_{t}\right)
}=T_{s_{t}}\sum_{s_{t-N}=1}^{h}\Pr [s_{t-N}|Y_{t-1},\mathcal{C}_{t-1}]\left(
I-K_{t-1\mid t-2}^{\left( \mathcal{H}_{t-1}\right) }Z_{s_{t-1}}\right) \xi
_{t-1\mid t-2}^{\left( \mathcal{H}_{t-1}\right) }+\omega _{t-1},
\label{LemmaGPB}
\end{equation}%
where 
\begin{equation*}
\omega _{t-1}=R_{s_{t}}\eta _{t}-T_{s_{t}}\sum_{s_{t-N}=1}^{h}\Pr
[s_{t-N}|Y_{t-1},\mathcal{C}_{t-1}]K_{t-1\mid t-2}^{\left( \mathcal{H}%
_{t-1}\right) }g_{s_{t-1}}\varepsilon _{t-1}.
\end{equation*}

2. For IMM filter%
\begin{equation}
\xi _{t\mid t-1}^{\left( \mathcal{H}_{t}\right) }=T_{s_{t}}\sum_{\mathcal{H}%
_{t-1}}\func{Pr}\left[ \mathcal{H}_{t-1}\mid Y_{t-1},\mathcal{H}_{t}\right]
\left( I-K_{t-1\mid t-2}^{\left( \mathcal{H}_{t-1}\right)
}Z_{s_{t-1}}\right) \xi _{t-1\mid t-2}^{\left( \mathcal{H}_{t-1}\right)
}+\omega _{t-1},  \label{LemmaIMM}
\end{equation}%
where 
\begin{equation*}
\omega _{t-1}=R_{s_{t}}\eta _{t}-T_{s_{t}}\sum_{\mathcal{H}_{t-1}}\func{Pr}%
\left[ \mathcal{H}_{t-1}\mid Y_{t-1},\mathcal{H}_{t}\right] K_{t-1\mid
t-2}^{\left( s_{t-1}\right) }g_{s_{t-1}}\varepsilon _{t-1}.
\end{equation*}
\end{lemma}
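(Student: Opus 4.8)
The plan is to obtain both recursions directly from the definition $\xi_{t\mid t-1}^{(\mathcal{H}_t)}:=\alpha_t-\alpha_{t\mid t-1}^{(\mathcal{H}_t)}$ in (\ref{xi_def}) by substituting the transition equation and the filter's own forecast step, and then peeling off one time index through the Kalman update. The two parts share the same mechanism and differ only in the dimension-reduction weights each filter uses: GPB collapses over the oldest regime $s_{t-N}$ with weights $\Pr[s_{t-N}\mid Y_{t-1},\mathcal{C}_{t-1}]$, while IMM mixes over the entire previous history $\mathcal{H}_{t-1}$ with weights $\Pr[\mathcal{H}_{t-1}\mid Y_{t-1},\mathcal{H}_t]$. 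I would carry out the GPB case (\ref{LemmaGPB}) in full and then note that (\ref{LemmaIMM}) is identical with the summation index and weights swapped.

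First I would substitute $\alpha_t=c_{\alpha,s_t}+T_{s_t}\alpha_{t-1}+R_{s_t}\eta_t$ from (\ref{te}) and the GPB forecast $\alpha_{t\mid t-1}^{(\mathcal{H}_t)}=c_{\alpha,s_t}+T_{s_t}\alpha_{t-1\mid t-1}^{(\mathcal{C}_{t-1})}$ into the definition. The intercepts cancel, leaving
\[
\xi_{t\mid t-1}^{(\mathcal{H}_t)}=T_{s_t}\bigl(\alpha_{t-1}-\alpha_{t-1\mid t-1}^{(\mathcal{C}_{t-1})}\bigr)+R_{s_t}\eta_t .
\]
The key manoeuvre is then to insert the collapsing identity (\ref{x_clpsd}) for $\alpha_{t-1\mid t-1}^{(\mathcal{C}_{t-1})}$ and, since the collapsing weights sum to one, to rewrite $\alpha_{t-1}=\sum_{s_{t-N}}\Pr[s_{t-N}\mid Y_{t-1},\mathcal{C}_{t-1}]\,\alpha_{t-1}$. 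The difference then moves inside the weighted sum as $\sum_{s_{t-N}}\Pr[s_{t-N}\mid Y_{t-1},\mathcal{C}_{t-1}]\bigl(\alpha_{t-1}-\alpha_{t-1\mid t-1}^{(\mathcal{H}_{t-1})}\bigr)$, using $\mathcal{H}_{t-1}=\{s_{t-N},\mathcal{C}_{t-1}\}$.

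Next I would express the one-step estimation error in terms of the previous forecast error. Combining the update step with the gain (\ref{Kgain}) gives $\alpha_{t-1\mid t-1}^{(\mathcal{H}_{t-1})}=\alpha_{t-1\mid t-2}^{(\mathcal{H}_{t-1})}+K_{t-1\mid t-2}^{(\mathcal{H}_{t-1})}v_{t-1\mid t-2}^{(\mathcal{H}_{t-1})}$, and the innovation decomposition $v_{t-1\mid t-2}^{(\mathcal{H}_{t-1})}=Z_{s_{t-1}}\xi_{t-1\mid t-2}^{(\mathcal{H}_{t-1})}+g_{s_{t-1}}\varepsilon_{t-1}$ (the same identity established in the text just above the Lemma) turns this error into $(I-K_{t-1\mid t-2}^{(\mathcal{H}_{t-1})}Z_{s_{t-1}})\xi_{t-1\mid t-2}^{(\mathcal{H}_{t-1})}-K_{t-1\mid t-2}^{(\mathcal{H}_{t-1})}g_{s_{t-1}}\varepsilon_{t-1}$. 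Substituting back and collecting the deterministic $(I-KZ)\xi$ part separately from the two noise contributions $R_{s_t}\eta_t$ and the gain-filtered $\varepsilon_{t-1}$ term yields (\ref{LemmaGPB}) with the stated $\omega_{t-1}$; replacing (\ref{x_clpsd}) by the mixing identity (\ref{x_mixed}) gives (\ref{LemmaIMM}).

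I expect the only genuine difficulty to be index bookkeeping rather than analysis. One must keep track that collapsing over $s_{t-N}$ is precisely the summation appearing in the recursion, and be careful that the Kalman gain entering $\omega_{t-1}$ is attached to the full history $\mathcal{H}_{t-1}$; I would therefore write it uniformly as $K_{t-1\mid t-2}^{(\mathcal{H}_{t-1})}$ and flag the reduced superscript $(s_{t-1})$ in the IMM statement as denoting the same object. Finally, the whole derivation is exact given the filter recursions: the only approximation is the Markov/mixing step already embedded in the filtered weights, so no new error is introduced at this stage.
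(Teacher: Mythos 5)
Your proposal is correct and follows essentially the same route as the paper's proof: substitute the transition equation and forecast step into the definition of $\xi_{t\mid t-1}^{\left(\mathcal{H}_{t}\right)}$, use the fact that the collapsing (resp.\ mixing) weights sum to one to pull $\alpha_{t-1}$ inside the weighted sum, then apply the Kalman update together with the innovation decomposition $v_{t-1\mid t-2}^{\left(\mathcal{H}_{t-1}\right)}=Z_{s_{t-1}}\xi_{t-1\mid t-2}^{\left(\mathcal{H}_{t-1}\right)}+g_{s_{t-1}}\varepsilon_{t-1}$ to collect the $\left(I-K_{t-1\mid t-2}^{\left(\mathcal{H}_{t-1}\right)}Z_{s_{t-1}}\right)\xi_{t-1\mid t-2}^{\left(\mathcal{H}_{t-1}\right)}$ term and the noise term $\omega_{t-1}$. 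The only cosmetic difference is that the paper unifies the two cases with a common weight notation $M\left(\psi\right)$ while you do GPB(N) in full and swap the weights for IMM(N), and your remark that the superscript $\left(s_{t-1}\right)$ on the gain in the IMM statement denotes the same object $K_{t-1\mid t-2}^{\left(\mathcal{H}_{t-1}\right)}$ correctly identifies a notational slip in the lemma as stated.
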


\begin{proof}
For GPB(N) we have 
\begin{equation*}
\xi _{t\mid t-1}^{\left( \mathcal{H}_{t}\right) }=\alpha _{t}-\alpha _{t\mid
t-1}^{\left( \mathcal{H}_{t}\right) }=T_{s_{t}}\left( \alpha _{t-1}-\alpha
_{t-1\mid t-1}^{\left( \mathcal{C}_{t-1}\right) }\right) +R_{s_{t}}\eta _{t},
\end{equation*}%
and, using $\alpha _{t-1\mid t-1}^{\left( \mathcal{C}_{t-1}\right) }$ from
equation (\ref{x_clpsd}), 
\begin{equation}
\xi _{t\mid t-1}^{\left( \mathcal{H}_{t}\right) }=T_{s_{t}}\left( \alpha
_{t-1}-\sum_{s_{t-N}=1}^{h}\Pr [s_{t-N}\mid Y_{t-1},\mathcal{C}_{t-1}]\alpha
_{t-1\mid t-1}^{\left( \mathcal{H}_{t-1}\right) }\right) +R_{s_{t}}\eta _{t}.
\label{GPBxi}
\end{equation}%
Similarly, for IMM(N) we have 
\begin{equation*}
\xi _{t\mid t-1}^{\left( \mathcal{H}_{t}\right) }=T_{s_{t}}\left( \alpha
_{t-1}-\hat{\alpha}_{t-1\mid t-1}^{\left( \mathcal{H}_{t-1}\mid \mathcal{H}%
_{t}\right) }\right) +R_{s_{t}}\eta _{t}
\end{equation*}%
and using $\hat{\alpha}_{t-1\mid t-1}^{\left( \mathcal{H}_{t-1}\mid \mathcal{%
H}_{t}\right) }$ from (\ref{x_mixed}), 
\begin{equation}
\xi _{t\mid t-1}^{\left( \mathcal{H}_{t}\right) }=T_{s_{t}}\left( \alpha
_{t-1}-\sum_{\mathcal{H}_{t-1}}\func{Pr}\left[ \mathcal{H}_{t-1}\mid Y_{t-1},%
\mathcal{H}_{t}\right] \alpha _{t-1\mid t-1}^{\left( \mathcal{H}%
_{t-1}\right) }\right) +R_{s_{t}}\eta _{t}.  \label{IMMxi}
\end{equation}%
Denote 
\begin{equation*}
M\left( \psi \right) =\left \{ 
\begin{array}{ccc}
\Pr [\psi \mid Y_{t-1},\mathcal{C}_{t-1}], & \psi =s_{t-N}, & \text{for
GPB(N),} \\ 
\func{Pr}\left[ \psi \mid Y_{t-1},\mathcal{H}_{t}\right] , & \psi =\mathcal{H%
}_{t-1}, & \text{for IMM(N).}%
\end{array}%
\right.
\end{equation*}%
Then expressions (\ref{GPBxi}) and (\ref{IMMxi}) can be written in the same
form,%
\begin{equation*}
\xi _{t\mid t-1}^{\left( \mathcal{H}_{t}\right) }=T_{s_{t}}\left( \alpha
_{t-1}-\sum_{\psi }M\left( \psi \right) \alpha _{t-1\mid t-1}^{\left( 
\mathcal{H}_{t-1}\right) }\right) +R_{s_{t}}\eta _{t},
\end{equation*}%
and the rest of the proof is identical for both families of filters.\newline
Use the KF update for $\alpha _{t-1\mid t-1}^{\left( \mathcal{H}%
_{t-1}\right) }$ and the definition of Kalman gain (\ref{Kgain}) for $%
K_{t-1\mid t-2}^{\left( \mathcal{H}_{t-1}\right) }$ to rewrite the last
expression as:%
\begin{eqnarray*}
\xi _{t\mid t-1}^{\left( \mathcal{H}_{t}\right) } &=&T_{s_{t}}\left( \alpha
_{t-1}-\sum_{\psi }M\left( \psi \right) \right. \\
&&\left. \times \left( \alpha _{t-1\mid t-2}^{\left( \mathcal{H}%
_{t-1}\right) }+P_{t-1\mid t-2}^{\left( \mathcal{H}_{t-1}\right)
}Z_{s_{t-1}}^{\prime }\left[ F_{t-1\mid t-2}^{\left( \mathcal{H}%
_{t-1}\right) }\right] ^{-1}v_{t-1\mid t-2}^{\left( \mathcal{H}_{t-1}\right)
}\right) \right) +R_{s_{t}}\eta _{t} \\
&=&T_{s_{t}}\sum_{\psi }M\left( \psi \right) \left( \alpha _{t-1}-\alpha
_{t-1\mid t-2}^{\left( \mathcal{H}_{t-1}\right) }\right) \\
&&-T_{s_{t}}\sum_{\psi }M\left( \psi \right) K_{t-1\mid t-2}^{\left( 
\mathcal{H}_{t-1}\right) }v_{t-1\mid t-2}^{\left( \mathcal{H}_{t-1}\right)
}+R_{s_{t}}\eta _{t}
\end{eqnarray*}%
Next, use the KF output for $v_{t-1\mid t-2}^{\left( \mathcal{H}%
_{t-1}\right) }$ along with the definition (\ref{xi_def}) for $\xi _{t-1\mid
t-2}^{\left( \mathcal{H}_{t-1}\right) }$ and equation (\ref{me}) for $%
y_{t-1} $ to obtain the recursions in Lemma \ref{Lemma}:%
\begin{eqnarray*}
\xi _{t\mid t-1}^{\left( \mathcal{H}_{N,t}\right) } &=&T_{s_{t}}\sum_{\psi
}M\left( \psi \right) \left( \alpha _{t-1}-\alpha _{t-1\mid t-2}^{\left( 
\mathcal{H}_{t-1}\right) }\right) \\
&&-T_{s_{t}}\sum_{\psi }M\left( \psi \right) K_{t-1\mid t-2}^{\left( 
\mathcal{H}_{t-1}\right) }\left( y_{t-1}-Z_{s_{t-1}}\alpha _{t-1\mid
t-2}^{\left( \mathcal{H}_{t-1}\right) }-c_{y,s_{t-1}}\right) +R_{s_{t}}\eta
_{t}
\end{eqnarray*}%
\begin{eqnarray*}
&=&T_{s_{t}}\sum_{\psi }M\left( \psi \right) \xi _{t-1\mid t-2}^{\left( 
\mathcal{H}_{t-1}\right) } \\
&&-T_{s_{t}}\sum_{\psi }M\left( \psi \right) K_{t-1\mid t-2}^{\left( 
\mathcal{H}_{t-1}\right) }Z_{s_{t-1}}\xi _{t-1\mid t-2}^{\left( \mathcal{H}%
_{t-1}\right) } \\
&&+R_{s_{t}}\eta _{t}-T_{s_{t}}\sum_{\psi }M\left( \psi \right) K_{t-1\mid
t-2}^{\left( \mathcal{H}_{t-1}\right) }g_{s_{t-1}}\varepsilon _{t-1} \\
&=&T_{s_{t}}\sum_{\psi }M\left( \psi \right) \left( I-K_{t-1\mid
t-2}^{\left( \mathcal{H}_{t-1}\right) }Z_{s_{t-1}}\right) \xi _{t-1\mid
t-2}^{\left( \mathcal{H}_{t-1}\right) }+\omega _{t-1}^{\left( \mathcal{H}%
_{t-1}\right) }
\end{eqnarray*}%
where we used the notation%
\begin{equation*}
\omega _{t-1}=R_{s_{t}}\eta _{t}-T_{s_{t}}\sum_{\psi }M\left( \psi \right)
K_{t-1\mid t-2}^{\left( \mathcal{H}_{t-1}\right) }g_{s_{t-1}}\varepsilon
_{t-1}.
\end{equation*}
\end{proof}

\bigskip Note that in this derivation for GPB(N) the sum is taken over all
possible regimes at time $t-N$ when $s_{t-N}$ is unknown. If the regime $%
s_{t-N}$ is known, this recursion is given by%
\begin{equation}
\xi _{t\mid t-1}^{\left( s_{t-N},\mathcal{H}_{t}\right) }=T_{s_{t}}\left(
I-K_{t-1\mid t-2}^{\left( s_{t-N},\mathcal{H}_{t-1}\right)
}Z_{s_{t-1}}\right) \xi _{t-1\mid t-2}^{\left( s_{t-N},\mathcal{H}%
_{t-1}\right) }+\omega _{t-1}.  \label{xi_m_GPB}
\end{equation}%
Similarly, for IMM, when $\mathcal{H}_{t-1}=\mathcal{\tilde{H}}_{t-1}$ is
known, then%
\begin{equation}
\xi _{t\mid t-1}^{\left( \mathcal{\tilde{H}}_{t-1},\mathcal{H}_{t}\right)
}=T_{s_{t}}\left( I-K_{t-1\mid t-2}^{\left( \mathcal{\tilde{H}}_{t-1},%
\mathcal{H}_{t-1}\right) }Z_{s_{t-1}}\right) \xi _{t-1\mid t-2}^{\left( 
\mathcal{\tilde{H}}_{t-1},\mathcal{H}_{t-1}\right) }+\omega _{t-1}.
\label{xi_m_IMM}
\end{equation}

The remaining derivations are identical for GPB and IMM.

We apply formulas (\ref{a_t_n}) and (\ref{P_t_n}) recursively, starting from
the observation at the final period, $n$, in regime $s_{n}$: 
\begin{eqnarray*}
\alpha _{n\mid n}^{\left( \mathcal{H}_{n}\right) } &=&\alpha _{n\mid
n-1}^{\left( \mathcal{H}_{n}\right) }+\mathbb{E}\left[ \xi _{n\mid
n-1}^{\left( \mathcal{H}_{n}\right) }\xi _{n\mid n-1}^{\left( \mathcal{H}%
_{n}\right) \prime }\mid Y_{n-1},\mathcal{H}_{n}\right] Z_{s_{n}}^{\prime }%
\left[ F_{n\mid n-1}^{\left( \mathcal{H}_{n}\right) }\right] ^{-1}v_{n\mid
n-1}^{\left( \mathcal{H}_{n}\right) } \\
&=&\alpha _{n\mid n-1}^{\left( \mathcal{H}_{n}\right) }+P_{n\mid
n-1}^{\left( \mathcal{H}_{n}\right) }Z_{s_{n}}^{\prime }\left[ F_{n\mid
n-1}^{\left( \mathcal{H}_{n}\right) }\right] ^{-1}v_{n\mid n-1}^{\left( 
\mathcal{H}_{n}\right) }=\alpha _{n\mid n-1}^{\left( \mathcal{H}_{n}\right)
}+P_{n\mid n-1}^{\left( \mathcal{H}_{n}\right) }r_{n\mid n-1}^{\left( 
\mathcal{H}_{n}\right) }
\end{eqnarray*}%
where%
\begin{equation*}
r_{n\mid n-1}^{\left( \mathcal{H}_{n}\right) }=Z_{s_{n}}^{\prime }\left[
F_{n\mid n-1}^{\left( \mathcal{H}_{n}\right) }\right] ^{-1}v_{n\mid
n-1}^{\left( \mathcal{H}_{n}\right) },
\end{equation*}%
and%
\begin{eqnarray*}
P_{n\mid n}^{\left( \mathcal{H}_{n}\right) } &=&P_{n\mid n-1}^{\left( 
\mathcal{H}_{n}\right) }-\mathbb{E}\left[ \xi _{n\mid n-1}^{\left( \mathcal{H%
}_{n}\right) }\xi _{n\mid n-1}^{\left( \mathcal{H}_{n}\right) \prime }\mid
Y_{n-1},\mathcal{H}_{n}\right] \\
&&\times Z_{s_{n}}^{\prime }\left[ F_{n\mid n-1}^{\left( \mathcal{H}%
_{n}\right) }\right] ^{-1}Z_{s_{n}}\mathbb{E}\left[ \xi _{n\mid n-1}\xi
_{n\mid n-1}^{\left( \mathcal{H}_{n}\right) \prime }\mid Y_{n-1},\mathcal{H}%
_{n}\right] \\
&=&P_{n\mid n-1}^{\left( \mathcal{H}_{n}\right) }-P_{n\mid n-1}^{\left( 
\mathcal{H}_{n}\right) }Z_{s_{n}}^{\prime }\left[ F_{n\mid n-1}^{\left( 
\mathcal{H}_{n}\right) }\right] ^{-1}Z_{s_{n}}P_{n\mid n-1}^{\left( \mathcal{%
H}_{n}\right) } \\
&=&P_{n\mid n-1}^{\left( \mathcal{H}_{n}\right) }-P_{n\mid n-1}^{\left( 
\mathcal{H}_{n}\right) }N_{n\mid n-1}^{\left( \mathcal{H}_{n}\right)
}P_{n\mid n-1}^{\left( \mathcal{H}_{n}\right) },
\end{eqnarray*}%
where%
\begin{equation*}
N_{n\mid n-1}^{\left( \mathcal{H}_{n}\right) }=Z_{s_{n}}^{\prime }\left[
F_{n\mid n-1}^{\left( \mathcal{H}_{n}\right) }\right] ^{-1}Z_{s_{n}}.
\end{equation*}

Next, we move one step back to $t=n-1$.%
\begin{eqnarray}
\alpha _{n-1\mid n}^{\left( \mathcal{H}_{n-1}\right) } &=&\alpha _{n-1\mid
n-2}^{\left( \mathcal{H}_{n-1}\right) }+\sum_{k=n-1}^{n}\mathbb{E}\left[ \xi
_{n-1\mid n-2}^{\left( \mathcal{H}_{n-1}\right) }\xi _{k\mid k-1}^{\prime
}\mid Y_{n-2},\mathcal{H}_{n-1}\right] Z_{s_{k}}^{\prime }\left[ F_{k\mid
k-1}\right] ^{-1}v_{k\mid k-1}  \notag \\
&=&\alpha _{n-1\mid n-2}^{\left( \mathcal{H}_{n-1}\right) }+P_{n-1\mid
n-2}^{\left( \mathcal{H}_{n-1}\right) }Z_{s_{n-1}}^{\prime }\left[
F_{n-1\mid n-2}^{\left( \mathcal{H}_{n-1}\right) }\right] ^{-1}v_{n-1\mid
n-2}^{\left( \mathcal{H}_{n-1}\right) }  \notag \\
&&+\sum_{s_{n}=1}^{h}\Pr \left[ s_{n}\mid Y_{n-2},\mathcal{H}_{n-1}\right] 
\notag \\
&&\times \mathbb{E}\left[ \xi _{n-1\mid n-2}^{\left( \mathcal{H}%
_{n-1}\right) }\xi _{n\mid n-1}^{\left( s_{n-N},\mathcal{H}_{n-1}\right)
\prime }\mid Y_{n-2},\mathcal{H}_{n-1},s_{n}\right] Z_{s_{n}}^{\prime }\left[
F_{n\mid n-1}^{\left( \mathcal{H}_{n}\right) }\right] ^{-1}v_{n\mid
n-1}^{\left( \mathcal{H}_{n}\right) }  \notag \\
&=&\alpha _{n-1\mid n-2}^{\left( \mathcal{H}_{n-1}\right) }+P_{n-1\mid
n-2}^{\left( \mathcal{H}_{n-1}\right) }Z_{s_{n-1}}^{\prime }\left[
F_{n-1\mid n-2}^{\left( \mathcal{H}_{n-1}\right) }\right] ^{-1}v_{n-1\mid
n-2}^{\left( \mathcal{H}_{n-1}\right) }  \notag \\
&&+\sum_{s_{n}=1}^{h}\Pr \left[ s_{n}\mid Y_{n-2},\mathcal{H}_{n-1}\right] 
\mathbb{E}\left[ \xi _{n-1\mid n-2}^{\left( \mathcal{H}_{n-1}\right) }\xi
_{n-1\mid n-2}^{\left( \mathcal{H}_{n-1}\right) \prime }\mid Y_{n-2},%
\mathcal{H}_{n-1}\right]  \notag \\
&&\times \left( I-Z_{s_{n-1}}^{\prime }K_{n-1\mid n-2}^{\left( \mathcal{H}%
_{n-1}\right) \prime }\right) T_{s_{t}}^{\prime }Z_{s_{n}}^{\prime }\left[
F_{n\mid n-1}^{\left( \mathcal{H}_{n}\right) }\right] ^{-1}v_{n\mid
n-1}^{\left( \mathcal{H}_{n}\right) }  \notag \\
&=&\alpha _{n-1\mid n-2}^{\left( \mathcal{H}_{n-1}\right) }+P_{n-1\mid
n-2}^{\left( \mathcal{H}_{n-1}\right) }Z_{s_{n-1}}^{\prime }\left[
F_{n-1\mid n-2}^{\left( \mathcal{H}_{n-1}\right) }\right] ^{-1}v_{n-1\mid
n-2}^{\left( \mathcal{H}_{n-1}\right) }  \label{anm1} \\
&&+P_{n-1\mid n-2}^{\left( \mathcal{H}_{n-1}\right) }\sum_{s_{n}=1}^{h}\Pr %
\left[ s_{n}\mid Y_{n-2},\mathcal{H}_{n-1}\right]  \notag \\
&&\times \left( I-Z_{s_{n-1}}^{\prime }K_{n-1\mid n-2}^{\left( \mathcal{H}%
_{n-1}\right) \prime }\right) T_{s_{t}}^{\prime }Z_{s_{n}}^{\prime }\left[
F_{n\mid n-1}^{\left( \mathcal{H}_{n}\right) }\right] ^{-1}v_{n\mid
n-1}^{\left( \mathcal{H}_{n}\right) }.  \notag
\end{eqnarray}%
Similar computations yield%
\begin{eqnarray}
P_{n-1\mid n}^{\left( \mathcal{H}_{n-1}\right) } &=&P_{n-1\mid n-2}^{\left( 
\mathcal{H}_{n-1}\right) }-\sum_{k=n-1}^{n}\mathbb{E}\left[ \xi _{n-1\mid
n-2}^{\left( \mathcal{H}_{n-1}\right) }\xi _{k\mid k-1}^{\prime }\mid
Y_{n-2},\mathcal{H}_{n-1}\right] Z_{s_{k}}^{\prime }  \notag \\
&&\times \left[ F_{n-1\mid n-2}^{\left( \mathcal{H}_{n-1}\right) }\right]
^{-1}Z_{s_{k}}\mathbb{E}\left[ \xi _{k\mid k-1}\xi _{n-1\mid n-2}^{\left( 
\mathcal{H}_{n-1}\right) \prime }\mid Y_{n-2},\mathcal{H}_{n-1}\right] 
\notag \\
&=&P_{n-1\mid n-2}^{\left( \mathcal{H}_{n-1}\right) }-P_{n-1\mid
n-2}^{\left( \mathcal{H}_{n-1}\right) }Z_{s_{n-1}}^{\prime }\left[
F_{n-1\mid n-2}^{\left( \mathcal{H}_{n-1}\right) }\right]
^{-1}Z_{s_{n-1}}P_{n-1\mid n-2}^{\left( \mathcal{H}_{n-1}\right) }
\label{pnm1} \\
&&-P_{n-1|n-2}^{\left( \mathcal{H}_{n-1}\right) }\sum_{s_{n}=1}^{h}\Pr \left[
s_{n}\mid Y_{n-2},\mathcal{H}_{n-1}\right] \left( I-Z_{s_{n-1}}^{\prime
}K_{n-1\mid n-2}^{\left( \mathcal{H}_{n-1}\right) \prime }\right)  \notag \\
&&\times T_{s_{t}}^{\prime }Z_{s_{n}}^{\prime }\left[ F_{n\mid n-1}^{\left( 
\mathcal{H}_{n}\right) }\right] ^{-1}Z_{s_{n-1}}T_{s_{t}}\left( I-K_{n-1\mid
n-2}^{\left( \mathcal{H}_{n-1}\right) }Z_{s_{n-1}}\right) P_{n-1\mid
n-2}^{\left( \mathcal{H}_{n-1}\right) }.  \notag
\end{eqnarray}

In \ \ these \ \ derivations \ \ \ we \ \ used $P_{n-1\mid n-2}^{\left( 
\mathcal{H}_{n-1}\right) }$ $=$ $\mathbb{E}\left[ \xi _{n-1\mid n-2}^{\left( 
\mathcal{H}_{n-1}\right) }\xi _{n-1\mid n-2}^{\left( \mathcal{H}%
_{n-1}\right) \prime }\mid Y_{n-2},\mathcal{H}_{n-1},s_{n}\right] $ $=$ $%
\mathbb{E}\left[ \xi _{n-1\mid n-2}^{\left( \mathcal{H}_{n-1}\right) }\xi
_{n-1\mid n-2}^{\left( \mathcal{H}_{n-1}\right) \prime }\mid Y_{n-2},%
\mathcal{H}_{n-1}\right] $ as conditioning on $s_{n}$ becomes irrelevant.

Equations (\ref{anm1}) and (\ref{pnm1}) can be written as: 
\begin{eqnarray*}
\alpha _{n-1\mid n}^{\left( \mathcal{H}_{n-1}\right) } &=&\alpha _{n-1\mid
n-2}^{\left( \mathcal{H}_{n-1}\right) }+P_{n-1\mid n-2}^{\left( \mathcal{H}%
_{n-1}\right) }r_{n-1\mid n-2}^{\left( \mathcal{H}_{n-1}\right) }, \\
P_{n-1\mid n}^{\left( \mathcal{H}_{n-1}\right) } &=&P_{n-1\mid n-2}^{\left( 
\mathcal{H}_{n-1}\right) }-P_{n-1\mid n-2}^{\left( \mathcal{H}_{n-1}\right)
}N_{n-1\mid n-2}^{\left( \mathcal{H}_{n}\right) }P_{n-1\mid n-2}^{\left( 
\mathcal{H}_{n-1}\right) },
\end{eqnarray*}%
where, using approximation $\Pr \left[ s_{n}\mid Y_{n-2},\mathcal{H}_{n-1}%
\right] \simeq Q\left( s_{n-1},s_{n}\right) ,$ we express $r_{n-1\mid
n-2}^{\left( \mathcal{H}_{n-1}\right) }$ and $N_{n-1\mid n-2}^{\left( 
\mathcal{H}_{n}\right) }$ recursively: 
\begin{eqnarray*}
r_{n-1\mid n-2}^{\left( \mathcal{H}_{n-1}\right) } &=&Z_{s_{n-1}}^{\prime }%
\left[ F_{n-1\mid n-2}^{\left( \mathcal{H}_{n-1}\right) }\right]
^{-1}v_{n-1\mid n-2}^{\left( \mathcal{H}_{n-1}\right)
}+\sum_{s_{n}=1}^{h}Q\left( s_{n-1},s_{n}\right) L_{n,n-1}^{\left( \mathcal{H%
}_{n-1}\right) \prime }r_{n\mid n-1}^{\left( \mathcal{H}_{n}\right) }, \\
N_{n-1\mid n-2}^{\left( \mathcal{H}_{n-1}\right) } &=&Z_{s_{n-1}}^{\prime }%
\left[ F_{n-1\mid n-2}^{\left( \mathcal{H}_{n-1}\right) }\right]
^{-1}Z_{s_{n-1}}+\sum_{s_{n}=1}^{h}Q\left( s_{n-1},s_{n}\right)
L_{n,n-1}^{\left( \mathcal{H}_{n-1}\right) \prime }N_{n\mid n-1}^{\left( 
\mathcal{H}_{n}\right) }L_{n,n-1}^{\left( \mathcal{H}_{n-1}\right) }.
\end{eqnarray*}%
Here%
\begin{equation*}
L_{n,n-1}^{\left( \mathcal{H}_{n-1}\right) }=T_{s_{n}}\left( I-K_{n-1\mid
n-2}^{\left( \mathcal{H}_{n-1}\right) }Z_{s_{n-1}}\right) .
\end{equation*}

Continuing in the same way, we can summarise the procedures in the following
algorithm.

\begin{algorithm}
State Smoothing
\end{algorithm}

\textbf{Step 0.} Initialise the smoother by setting $r_{n\mid n-1}^{\left( 
\mathcal{H}_{n}\right) }=Z_{s_{n}}^{\prime }\left[ F_{n\mid n-1}^{\left( 
\mathcal{H}_{n}\right) }\right] ^{-1}v_{n\mid n-1}^{\left( \mathcal{H}%
_{n}\right) }$, $\ \ N_{n\mid n-1}^{\left( \mathcal{H}_{n}\right)
}=Z_{s_{n}}^{\prime }\left[ F_{n\mid n-1}^{\left( \mathcal{H}_{n}\right) }%
\right] ^{-1}Z_{s_{n}},$ $\mathcal{H}_{n}\in \mathbb{H}_{N,n}$. Here $\alpha
_{n\mid n}^{\left( \mathcal{H}_{n}\right) }$ is the output of the
corresponding filter at $t=n$.

\textbf{Step 1.} Compute the smoothed estimates of the state vector an the
MSE for each history $\mathcal{H}_{t}$, using recursion:%
\begin{eqnarray*}
L_{t+1,t}^{\left( \mathcal{H}_{t}\right) } &=&T_{s_{t+1}}\left( I-K_{t\mid
t-1}^{\left( \mathcal{H}_{t}\right) }Z_{s_{t}}\right) \\
r_{t\mid t-1}^{\left( \mathcal{H}_{t}\right) } &=&Z_{s_{t}}^{\prime }\left[
F_{t\mid t-1}^{\left( \mathcal{H}_{t}\right) }\right] ^{-1}v_{t\mid
t-1}^{\left( \mathcal{H}_{t}\right) }+\sum_{s_{t+1}=1}^{h}Q\left(
s_{t},s_{t+1}\right) L_{t+1,t}^{\left( \mathcal{H}_{t}\right) \prime
}r_{t+1\mid t}^{\left( \mathcal{H}_{t+1}\right) }, \\
N_{t\mid t-1}^{\left( \mathcal{H}_{t}\right) } &=&Z_{s_{t}}^{\prime }\left[
F_{t\mid t-1}^{\left( \mathcal{H}_{t}\right) }\right] ^{-1}Z_{s_{t}}+%
\sum_{s_{t+1}=1}^{h}Q\left( s_{t},s_{t+1}\right) L_{t+1,t}^{\left( \mathcal{H%
}_{t}\right) \prime }N_{t+1|t}^{\left( \mathcal{H}_{t+1}\right)
}L_{t,t+1}^{\left( \mathcal{H}_{t}\right) } \\
\alpha _{t\mid n}^{\left( \mathcal{H}_{t}\right) } &=&\alpha _{t\mid
t-1}^{\left( \mathcal{H}_{t}\right) }+P_{t\mid t-1}^{\left( \mathcal{H}%
_{t}\right) }r_{t\mid t-1}^{\left( \mathcal{H}_{t}\right) }, \\
P_{t\mid n}^{\left( \mathcal{H}_{t}\right) } &=&P_{t\mid t-1}^{\left( 
\mathcal{H}_{t}\right) }-P_{t\mid t-1}^{\left( \mathcal{H}_{t}\right)
}N_{t\mid t-1}^{\left( \mathcal{H}_{t}\right) }P_{t\mid t-1}^{\left( 
\mathcal{H}_{t}\right) },
\end{eqnarray*}%
for $t=n-1,n-2,...,1.$

Use the smoothed probabilities, $\left \{ \mu _{t\mid n}^{(\mathcal{H}%
_{t})}\right \} $, to compute the smoothed state vectors and MSE matrices:%
\begin{eqnarray*}
x_{t\mid n} &=&\sum_{\mathcal{H}_{t}}\mu _{t\mid n}^{(\mathcal{H}%
_{t})}\alpha _{t\mid n}^{\left( \mathcal{H}_{t}\right) }, \\
P_{t\mid n} &=&\sum_{\mathcal{H}_{t}}\mu _{t\mid n}^{(\mathcal{H}%
_{t})}P_{t\mid n}^{\left( \mathcal{H}_{t}\right) }.
\end{eqnarray*}

\section{Validating the Filters\label{Sec Simulations}}

\subsection{Model and Parameterisation}

To compare the performance of filters and smoothers, we use the model
developed in\newline
\citet*{FVGQRR}, hereafter referred to as FGR2015. It is a relatively
standard medium-scale New Keynesian DSGE model, which we modify to
investigate the aspects of good luck and good policy.

The model consists of a household sector, firms, and a monetary authority.
Households derive utility from consumption relative to their habit stock and
from leisure. They supply differentiated labour to monopolistically
competitive firms and choose wages subject to Calvo wage-setting friction.
Firms produce differentiated output using capital, labour, and a neutral
technology process. They set prices, also subject to Calvo pricing
frictions. The capital stock evolves in the usual way, except for the
inclusion of embodied technology in new investment goods. The model is
closed by imposing a Taylor-type rule for the monetary authority. We present
the full specification of the model in Appendix \ref{App model}.

We base the structural parameters of the model on the estimates reported in
FGR2015; see column (1) in Table C1 in Appendix \ref{App model}. Our
treatment of policy and shock volatilities is different from FGR2015, who
estimated a single-regime nonlinear policy function and a single-regime
stochastic volatility process. We introduce two Markov-switching processes
into the model. The first, $S_{P,t}$, governs policy parameters in the
following monetary policy rule: 
\begin{equation}
\frac{r_{t}}{r_{ss}}=\left( \frac{r_{t-1}}{r_{ss}}\right) ^{\gamma
_{r}\left( S_{P,t}\right) }\left( \left( \frac{\pi _{t}}{\pi _{\text{targ}}}%
\right) ^{\gamma _{\pi }\left( S_{P,t}\right) }\left( \frac{Y_{d,t}}{\lambda
_{yd}Y_{d,t-1}}\right) ^{\gamma _{y}\left( S_{P,t}\right) }\right)
^{1-\gamma _{r}\left( S_{P,t}\right) }\exp \left( \sigma _{\xi }\left(
S_{V,t}\right) \varepsilon _{\xi ,t}\right) .  \label{policyrule}
\end{equation}%
The literature typically categorizes monetary policy approaches into hawkish
and dovish modes, characterized by more and less aggressive responses to
inflation, respectively. Accordingly, we assume that the $\gamma -$%
parameters are high in state $S_{P,t}=1$ (hawkish state) and low in state $%
S_{P,t}=2$ (dovish state). We explain below how we chose these values. The
second two-state process, $S_{V,t},$ governs the shock volatilities for all
shocks, including the policy shock in equation (\ref{policyrule}).

\subsection{Monte-Carlo Simulations Design}

In our simulations, we aim to differentiate between periods of infrequent
large shocks and periods of more frequent regular shocks. We set the
probability of remaining in the low volatility state to 0.95. This
parameterisation implies an average of 20 quarters between high shocks, with
a standard deviation of 19 quarters.\footnote{%
If probability to leave one of the two Markov states is $q$, then the
expected length of stay in this state is $1/q$ with the standard deviation
of $\sqrt{1-q}/q.$} This probability accurately reflects the fact that
recessions in the US have occurred approximately every 8-10 years since the
end of World War II. We set the probability of staying in the high
volatility state to 0.8, resulting in an average duration of high shock
periods of 5 quarters (with a standard deviation of 4.5 quarters).
Interpreting periods of large shocks as recessions suggests that a typical
recession lasts slightly for less than a year, a duration that our
parameterisation appropriately captures.

In formulating our policy model, we applied considerations similar to those
used in the assumptions in the shock volatility experiments. The existing
literature tends to report that hawkish policies have been predominant since
the 1980s, spanning approximately 40 years.\footnote{%
See e.g. \citet{BianchiMelosiAER2017}, \citet*{CKL2017}.} However,
considering the data starting from 1955 and acknowledging the evident dovish
tendencies since 2008, we infer that the time split between these regimes is
roughly equal. Therefore, we assume symmetric diagonal elements in the
transition probability matrix. As the benchmark case, we calibrate the
probability to remain in either of these states at 0.95. This implies an
average of 20 quarters between policy changes, allowing for a wide range of
durations between policy shifts. In addition, we consider an alternative
calibration, with this probability set to 0.1. All transition matrices are
presented in Table \ref{Tab regimes}.%
%TCIMACRO{\TeXButton{B}{\begin{table}[h] \centering}}%
%BeginExpansion
\begin{table}[h] \centering%
%EndExpansion
\caption{Parameterisation of shock and policy regimes}%
\begin{tabular}{ccc}
\multicolumn{3}{c}{Transition matrices} \\ \hline\hline
Shocks & Benchmark Case & Alternative Case \\ \hline
$P_{s}=\overset{}{\left[ 
\begin{array}{cc}
0.95 & 0.05 \\ 
0.2 & 0.8%
\end{array}%
\right] }$ & $P_{p}^{I}=\underset{}{\left[ 
\begin{array}{cc}
0.95 & 0.05 \\ 
0.05 & 0.95%
\end{array}%
\right] }$ & $P_{p}^{II}=\left[ 
\begin{array}{cc}
0.9 & 0.1 \\ 
0.1 & 0.9%
\end{array}%
\right] $ \\ \hline\hline
&  &  \\ 
\multicolumn{3}{c}{Parameters of Taylor Rule:} \\ \hline\hline
& Hawkish Feedback & Dovish Feedback \\ \hline
$\gamma _{\pi }^{Base}$ & $1.7$ & $0.9$ \\ 
$\gamma _{\pi }^{Altern.}$ & $1.5$ & $0.9$ \\ \hline\hline
\end{tabular}%
\label{Tab regimes}%
%TCIMACRO{\TeXButton{E}{\end{table}}}%
%BeginExpansion
\end{table}%
%EndExpansion

As for the policy coefficients that are time-varying (or depend on the
state), we describe the hawkish policy mode with feedback on inflation $%
\gamma _{\pi }^{Base}=1.7$ in the hawkish state and $\gamma _{\pi
}^{Base}=0.9$ in the dovish state, consistent with findings in other studies%
\footnote{%
See, e.g. \citet{BianchiAER2012}, \citet*{Chang2021}, \citet*{CLL2022}.}. We
also consider an alternative parameterisation where these two feedbacks are
less distinct, as shown in Table \ref{Tab regimes}. In these simulations, we
keep the feedback on output and the interest rate smoothing parameter the
same in both hawkish and dovish states.

As reported in column (1) in Table C1, the standard deviations of all shocks
in the low-volatility state, $S_{V,t}=1,$ are set to be equal to the mean
estimates of corresponding variables in FGR2015, and they are doubled in the
high-volatility state, $S_{V,t}=2.$

In order to generate artificial data, we solve and simulate this non-linear
model using a perturbation approach with the functional iteration algorithm
developed for RISE$^{\textcopyright}$ \ (\citealp{MaihRISE2015}).

We chose to generate 500 samples of 1,000 observations each. We consider
output growth, price inflation, wage inflation, the Federal Funds rate, and
the relative price of investment goods as observable variables. The latent
variables are listed in Table \ref{Tab MRSE1K} and other relevant tables. We
then use the simulation results to investigate the performance of the
discussed filters, controlling for the sample length. Within each sample, we
use the initial 300 observations as a proxy for a typical real-life scenario
with post-WW2 quarterly data, where the influence of initial conditions can
be substantial. Additionally, we analyze the full sample of 1000
observations, in which we expect the impact of initial conditions to be
significantly diminished.

\subsection{Results}

\subsubsection{Evaluation Criteria}

We need some criteria to rank the filters for practical purpose, based on
their accuracy and speed. For accuracy, or goodness-of-fit, in our exercise
we cannot use measures linked to the likelihood $L_{t}=\log f\left(
y_{t}\mid Y_{t-1}\right) $ returned by the filters. This is because
different filters employ different approximations when computing the
likelihood, and so comparison based on this measure is not compelling for
comparison of the filters. An alternative and, perhaps, more straightforward
approach in our case is to use root mean squared errors (RMSE) for each
latent variable $\alpha _{t}$, given by the formula%
\begin{equation*}
\mathcal{R}_{\varphi }=\frac{1}{n_{sim}}\sum_{i=1}^{n_{sim}}\sqrt{\frac{1}{n}%
\sum_{t=1}^{n}\left( \frac{\alpha _{t}-\alpha _{\varphi }}{\alpha _{ss}}%
\right) ^{2}}.
\end{equation*}%
We\ present the comparison of the accuracy of the filters based on the
updated variables ($\varphi =t\mid t$) and smoothed variables ($\varphi
=t\mid n)$ in Tables \ref{Tab MRSE1K}-\ref{Tab Missp}.\footnote{%
In computing RMSEs, we normalise all variables, except state probabilities,
by their steady-state levels, as in this model the steady state is identical
for all regimes.} Here, $n$ is the length of each data sample, and $n_{sim}$
is the number of simulations.

\subsubsection{The Best Performing Filter}

Table \ref{Tab MRSE1K} shows the results for four filters: the IMM(1) and
GPB(N) for $N=1,2,3$, which includes the KN filter as it is equivalent to
GPB(2).\footnote{%
Appendix \ref{App Filters} presents selected filtering and smoothing
algorithms in a form convenient for implementation.} Our simulations reveal
that increasing the order of the GPB(N) filter beyond N=3 offers no
practical value. We do not present results for IMM(2) as it does not
noticeably improve accuracy of the IMM(1).

%TCIMACRO{\TeXButton{B}{\begin{table}[h] \centering}}%
%BeginExpansion
\begin{table}[h] \centering%
%EndExpansion
\caption{MRSEs for update variables from four filters}%
\begin{tabular}{lcccc|cccc}
\hline\hline
& \multicolumn{4}{c|}{{\small Absolute RMSEs }$R_{t\mid t}$} & 
\multicolumn{4}{|c}{\small Relative RMSEs} \\ \cline{2-9}
& {\small IMM(1)} & {\small GPB(2)} & {\small GPB(1)} & {\small GPB(3)} & 
{\small IMM(1)} & {\small GPB(2)} & {\small GPB(1)} & {\small GPB(3)} \\ 
&  & {\small KN} &  &  &  & {\small KN} &  &  \\ 
{\small variables} & {\small (1)} & {\small (2)} & {\small (3)} & {\small (4)%
} & {\small (5)} & {\small (6)} & {\small (7)} & {\small (8)} \\ \hline
{\small consumption} & {\small 0.032} & {\small 0.032} & {\small 0.033} & 
{\small 0.032} & {\small 1.0003} & {\small 1} & {\small 1.030} & {\small %
1.0002} \\ 
{\small capital} & {\small 0.226} & {\small 0.226} & {\small 0.238} & 
{\small 0.226} & {\small 1.0001} & {\small 1} & {\small 1.051} & {\small %
1.0006} \\ 
{\small output} & {\small 0.030} & {\small 0.030} & {\small 0.030} & {\small %
0.030} & {\small 1.0005} & {\small 1.00004} & {\small 1.016} & {\small 1} \\ 
{\small real wage} & {\small 0.002} & {\small 0.002} & {\small 0.002} & 
{\small 0.002} & {\small 1.0002} & {\small 1} & {\small 1.035} & {\small %
1.0003} \\ 
{\small Tobin's Q} & {\small 0.010} & {\small 0.010} & {\small 0.010} & 
{\small 0.010} & {\small 1.0001} & {\small 1} & {\small 1.025} & {\small %
1.0001} \\ 
{\small investment} & {\small 0.302} & {\small 0.302} & {\small 0.321} & 
{\small 0.302} & {\small 1} & {\small 1.00004} & {\small 1.065} & {\small %
1.0010} \\ 
{\small lab supply} & {\small 0.029} & {\small 0.029} & {\small 0.030} & 
{\small 0.029} & {\small 1.0005} & {\small 1.00004} & {\small 1.016} & 
{\small 1} \\ 
{\small pref shock} & {\small 0.043} & {\small 0.043} & {\small 0.044} & 
{\small 0.043} & {\small 1.0001} & {\small 1} & {\small 1.015} & {\small %
1.00006} \\ 
{\small lab sup shock} & {\small 0.070} & {\small 0.070} & {\small 0.072} & 
{\small 0.070} & {\small 1.0005} & {\small 1.0001} & {\small 1.019} & 
{\small 1} \\ 
{\small tech shock} & {\small 0.001} & {\small 0.001} & {\small 0.001} & 
{\small 0.001} & {\small 1.0004} & {\small 1} & {\small 1.066} & {\small %
1.0015} \\ \hline
{\small shock reg. probs} & {\small 0.265} & {\small 0.265} & {\small 0.266}
& {\small 0.265} & {\small 1.00002} & {\small 1} & {\small 1.002} & {\small %
1.0004} \\ 
{\small policy reg. probs} & {\small 0.335} & {\small 0.335} & {\small 0.344}
& {\small 0.335} & {\small 1} & {\small 1.00001} & {\small 1.023} & {\small %
1.00003} \\ \hline\hline
\end{tabular}%
\label{Tab MRSE1K}%
%TCIMACRO{\TeXButton{E}{\end{table}}}%
%BeginExpansion
\end{table}%
%EndExpansion

We focus on the \textit{updated} variables, as these variables contribute to
the likelihood used in estimation. The average RMSEs (denoted as $\mathcal{R}%
_{t\mid t}$) for all 500 draws in the Monte Carlo experiment are presented
in columns (1)-(4) of Table \ref{Tab MRSE1K}. They vary in magnitude,
reflecting findings similar to those in \citet{BinningMaih2015}, where it is
observed that highly persistent latent variables, such as capital, pose
greater challenges for reconstruction.

The relative RMSEs in columns (5)-(8) are computed by dividing the RMSE for
each variable by the lowest RMSE for that particular variable across
investigated filters. In other words, for the best-performing filter it has
the value of 1, while for all other filters its value is greater than one.

\begin{figure} [!h]
    \begin{center}
    \includegraphics[width = 0.95\textwidth]{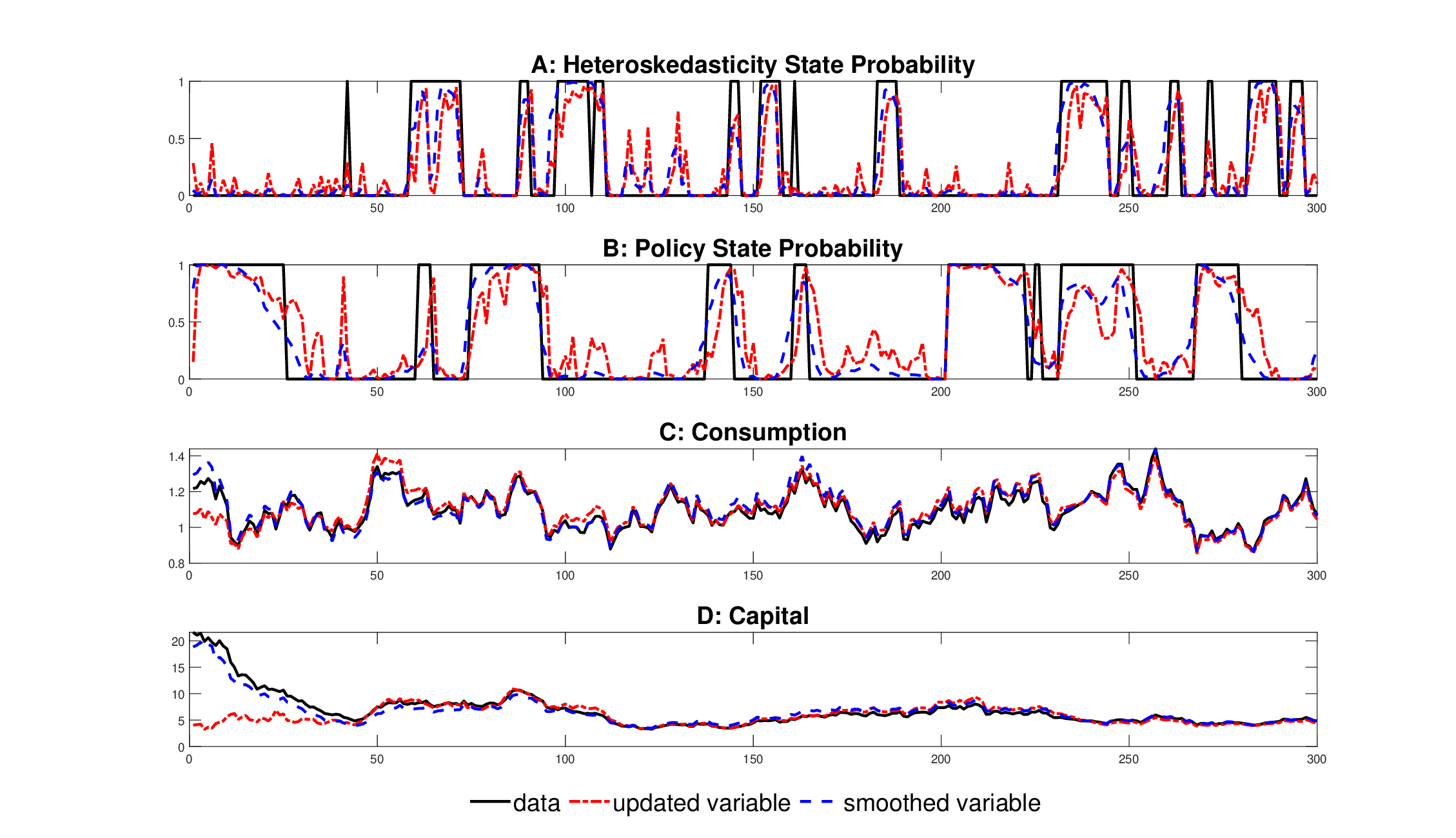}
    \caption{Updating and smoothing produced by IMM filter.}
    \label{fig:FigUandS}
    \end{center}
\end{figure}

%TCIMACRO{\TeXButton{B}{\begin{table}[h] \centering}}%
%BeginExpansion
\begin{table}[h] \centering%
%EndExpansion
\caption{Computational times for filtering 1000 observations}%
\begin{tabular}{lcc|cccc}
\hline\hline
\multicolumn{3}{c}{A: IMM{\small (1)}\ vs. GPB(2)-KN} & \multicolumn{4}{|c}{
B: Relative speed} \\ \hline
& updating & updating and &  & \multicolumn{3}{c}{updating} \\ 
& only & smoothing &  & \multicolumn{3}{c}{only} \\ 
& sec & sec &  & ratio &  & ratio \\ \hline
&  &  & GPB(1) & 0.28 & \multicolumn{1}{|c}{} &  \\ 
IMM{\small (1)} & 0.27 & 1.49 & GPB(2) & 1 & \multicolumn{1}{|c}{IMM(1)} & 1
\\ 
\multicolumn{1}{c}{GPB(2)} & 1.38 & 2.59 & GPB(3) & 4.21 & 
\multicolumn{1}{|c}{IMM(2)} & 5.81 \\ 
&  &  & GPB(4) & 17.74 & \multicolumn{1}{|c}{IMM(3)} & 52.70 \\ 
&  &  & GPB(5) & 79.97 & \multicolumn{1}{|c}{IMM(4)} & 691.14 \\ \hline\hline
\end{tabular}%
\label{Tab Speed}%
%TCIMACRO{\TeXButton{E}{\end{table}}}%
%BeginExpansion
\end{table}%
%EndExpansion

These RMSEs\ are visualised by a red dash-dotted line in Figure \ref{Fig
UandS}, illustrating the recovery of latent variables and probabilities of
being in a particular Markov state in one particular simulation. The true
values are shown by the solid black lines. We plot only the initial 300
observations for clarity of visualisation.

In terms of computational burden, IMM{\small (1)} has an advantage over
KN-GPB(2) as it works with $h$ histories, rather than with $h^{2}$. Panel A
of Table \ref{Tab Speed} shows indicative computational times for these two
filters, both independently and in conjunction with the corresponding
smoother.\footnote{%
These numbers are achieved on a Ryzen 3950X with 64GB RAM using MATLAB$^{%
\textcopyright}$ R2022b.} These times serve merely as an indication of
computational speed, as all filters implemented in RISE$^{\textcopyright}$
perform additional tasks beyond algorithm computation, which, even if not
used, hold the potential to reduce speed.\footnote{%
This includes checking for and accomodating properties such as time-varying
states and parameters, missing observations, nonstationarity, and
occasionally-binding constraints. See Appendix \ref{App RISE} for further
notes on implementation.} It is important to note that at the estimation
stage, where speed is particularly crucial, smoothing is not applied.

One can see in panel B of Table \ref{Tab Speed} how quickly the computation
burden of IMM rises with higher orders. This is because the algorithm keeps
track of all possible histories of fixed length where the histories contain
all possible combinations of regimes in every time period of recursion.
Therefore, the number of terms containing probabilities of different
combinations of regimes is much bigger in IMM than in GPB.

One can reduce the computational burden in IMM(N) by using approximation
similar to (\ref{Markov}). However, because of the repeated use the
disrepancy created by this approximation accumulates and leads to lower
accuracy, especially for higher $N$.

On the balance between accuracy and speed, it is clear that IMM{\small (1)}
dominates the GPB(N) family of filters. Based on that, we argue that in
practical applications, IMM{\small (1)} is the best. In what follows, we use
IMM to denote IMM(1).

\subsubsection{Updating and Smoothing}

To visualize the effect of smoothing, in Figure \ref{FigUandS} we plot
updated and smoothed probabilities and latent variables, alongside their
true values. When comparing all lines, we see that smoothing reduces
high-frequency noise in the recovered probabilities and often helps to
identify the timing of regime changes more accurately. It is also apparent
that the initial gap between the updated and actual values of latent
variables is substantially corrected, although the improvement is not
uniform across the entire sample. There are time periods where smoothing
does not improve these particular variables at all. Notably, while the
impact of initial conditions on the filter's effectiveness for economic
variables is very clear, such an effect is less prominent for probabilities.%
%TCIMACRO{\TeXButton{B}{\begin{table}[h] \centering}}%
%BeginExpansion
\begin{table}[h] \centering%
%EndExpansion
\caption{Accuracy improvement by smoothing, $1-(R_{t|T}/R_{t|t})$}%
\begin{tabular}{lcccc}
\hline\hline
vars: & IMM & GPB(2) & GPB(1) & GPB(3) \\ \hline
consumption & 0.29 & 0.27 & 0.31 & 0.27 \\ 
capital & 0.21 & 0.21 & 0.23 & 0.21 \\ 
output & 0.42 & 0.37 & 0.47 & 0.38 \\ 
real wage & 0.18 & 0.18 & 0.18 & 0.18 \\ 
Tobin's Q & 0.25 & 0.25 & 0.23 & 0.25 \\ 
investment & 0.30 & 0.28 & 0.32 & 0.28 \\ 
labour supply & 0.42 & 0.37 & 0.47 & 0.38 \\ 
preference shock & 0.14 & 0.13 & 0.14 & 0.13 \\ 
labour supply shock & 0.37 & 0.33 & 0.40 & 0.34 \\ 
technology shock & 0.10 & 0.10 & 0.04 & 0.10 \\ \hline
shock state probs & 0.15 & 0.15 & 0.15 & 0.15 \\ 
policy state probs & 0.18 & 0.18 & 0.16 & 0.18 \\ \hline\hline
\end{tabular}%
\label{Tab Smooth}%
%TCIMACRO{\TeXButton{E}{\end{table}}}%
%BeginExpansion
\end{table}%
%EndExpansion

Figure \ref{FigUandS} illustrates the work of smoothing for the IMM filter.
For all considered filters, Table \ref{Tab Smooth} reports the fractions of
RMSEs that is removed by smoothing, $1-\frac{\mathcal{R}_{t\mid T}}{\mathcal{%
R}_{t\mid t}}.$

While the RMSEs for some variables are only improved by 4\%, for some others
the improvement is as large as 47\%, and the average improvement is about
25\%. For state probabilities, the average improvement is 16\%. Notably, the
less computationally intensive smoothers for IMM and GPB(1) improve accuracy
better than more complex algorithms for higher order of GPB filters.

\subsubsection{Information}

The following experiments aim to address the importance of working with
longer series of data. In the first experiment, we use the IMM filter; other
filters show very similar results. Column (1) of Panel A in Table \ref{Tab
Info}\ presents RMSEs for updated variables using the first 300 observations
in each simulation. Columns (2) and (3) present RMSEs for smoothed variables
for the first 300 observations, where smoothing starts from the last of
these 300 observations in column (2) and from the last of all 1000
observations in \ column (3).\ The comparison is consistent with the
intuition that more information improves accuracy. However, this intuition
is not necessarily true in our setting because filtering and smoothing
procedures involve numerous approximations. It is remarkable that, despite
these approximations, the improvement in accuracy is substantial: close to
10\% for some variables.

%TCIMACRO{\TeXButton{B}{\begin{table}[h] \centering}}%
%BeginExpansion
\begin{table}[h] \centering%
%EndExpansion
\caption{Importance of Information. Panel A: RMSE for updated (1) and
smoothed (2,3) variables. Panel B: Accuracy improvement by smoothing.}%
\begin{tabular}{lccc|cccc}
\hline\hline
& \multicolumn{3}{c}{Panel A} & \multicolumn{4}{|c}{Panel B} \\ 
sample size & 300 & 300 & 300 & 300 & 250 & 200 & 100 \\ 
& $\mathcal{R}_{t\mid t}$ & $\mathcal{R}_{t\mid 300}$ & $\mathcal{R}_{t\mid
1000}$ & 1-$\frac{\mathcal{R}_{t\mid 300}}{\mathcal{R}_{t\mid t}}$ & 1-$%
\frac{\mathcal{R}_{t\mid 250}}{\mathcal{R}_{t\mid t}}$ & 1-$\frac{\mathcal{R}%
_{t\mid 200}}{\mathcal{R}_{t\mid t}}$ & 1-$\frac{\mathcal{R}_{t\mid 100}}{%
\mathcal{R}_{t\mid t}}$ \\ 
vars: & (1) & (2) & (3) & (1) & (2) & (3) & (4) \\ \hline
consumption & 0.051 & 0.036 & 0.033 & 0.30 & 0.30 & 0.30 & 0.26 \\ 
capital & 0.352 & 0.277 & 0.266 & 0.21 & 0.21 & 0.20 & 0.14 \\ 
output & 0.050 & 0.029 & 0.026 & 0.43 & 0.43 & 0.42 & 0.33 \\ 
real wage & 0.003 & 0.002 & 0.002 & 0.17 & 0.17 & 0.17 & 0.11 \\ 
Tobin's Q & 0.010 & 0.008 & 0.008 & 0.26 & 0.26 & 0.26 & 0.22 \\ 
investment & 0.466 & 0.317 & 0.300 & 0.32 & 0.32 & 0.32 & 0.27 \\ 
labour supply & 0.049 & 0.028 & 0.026 & 0.43 & 0.42 & 0.42 & 0.32 \\ 
pref. shock & 0.051 & 0.042 & 0.041 & 0.18 & 0.19 & 0.20 & 0.19 \\ 
lab. supp. shock & 0.110 & 0.066 & 0.061 & 0.40 & 0.40 & 0.40 & 0.30 \\ 
techn. shock & 0.002 & 0.001 & 0.001 & 0.10 & 0.10 & 0.11 & 0.11 \\ \hline
shock state probs & 0.265 & 0.225 & 0.225 & 0.15 & 0.15 & 0.15 & 0.15 \\ 
policy state probs & 0.335 & 0.275 & 0.275 & 0.18 & 0.18 & 0.18 & 0.19 \\ 
\hline\hline
\end{tabular}%
\label{Tab Info}%
%TCIMACRO{\TeXButton{E}{\end{table}}}%
%BeginExpansion
\end{table}%
%EndExpansion

Panel B of Table \ref{Tab Info} shows an improvement in the RMSEs from
smoothing \ obtained in the second experiment. Here we explore different
sample sizes, $n\in \{300,250,200,100\}$. We know that RMSEs of updated
variables are larger in shorter samples. One might expect that the efficacy
of smoothing will also deteriorate in shorter samples. However, Panel B
reveals that this is not the case, except for $n=100$. This suggests that
the sample size should be in excess of 100 and, perhaps, at least 200 to
ensure that the smoother improves accuracy.

Another important observation is that there is no sample size effect for the
smoothing of state probabilities, as can be seen in the last two rows of
Panel B. This is consistent with the results presented in Figure \ref{Fig
UandS}, where the effect of initial conditions is only observed for latent
economic variables: in a shorter sample initial conditions play bigger role,
but this not the case for probabilities.

\subsubsection{Policy States}

For the next experiment, we simulate artificial data for less and more
distinct policy states as measured by different probabilities of remaining
in a given policy state in the next period, and by larger difference in the
feedback coefficient $\gamma _{\pi }$ in two policy states. Table \ref{Tab
MRSEpolicy} reports the results.

%TCIMACRO{\TeXButton{B}{\begin{table}[h] \centering}}%
%BeginExpansion
\begin{table}[h] \centering%
%EndExpansion
\caption{RMSE for Pobability of Hawkish Policy State.}%
\begin{tabular}{ccccc}
\hline\hline
policy description & $\gamma _{\pi }$ & $P_{HH}=P_{DD}$ & updated & smoothed
\\ \hline
more distinct states, less distinct feedback & 1.5 & 0.95 & 0.366 & 0.310 \\ 
less distinct states, less distinct feedback & 1.5 & 0.9 & 0.412 & 0.380 \\ 
more distinct states, more distinct feedback & 1.7 & 0.95 & 0.335 & 0.275 \\ 
less distinct states, more distinct feedback & 1.7 & 0.9 & 0.386 & 0.349 \\ 
\hline\hline
\end{tabular}%
\label{Tab MRSEpolicy}%
%TCIMACRO{\TeXButton{E}{\end{table}}}%
%BeginExpansion
\end{table}%
%EndExpansion

We conclude that the greater the difference between the states, the better
is their identification. This is true for both updated and smoothed policy
state probabilities.

\subsubsection{Model Misspecification}

We investigate several cases of the model misspecification relevant for the
Markov-switching nature of our model. We assume that the true
data-generating process contains two Markov-switching processes as described
above, but a researcher only considers one of them, either in policy or in
volatility.

\begin{figure} [!h]
    \begin{center}
    \includegraphics[width = 0.95\textwidth]{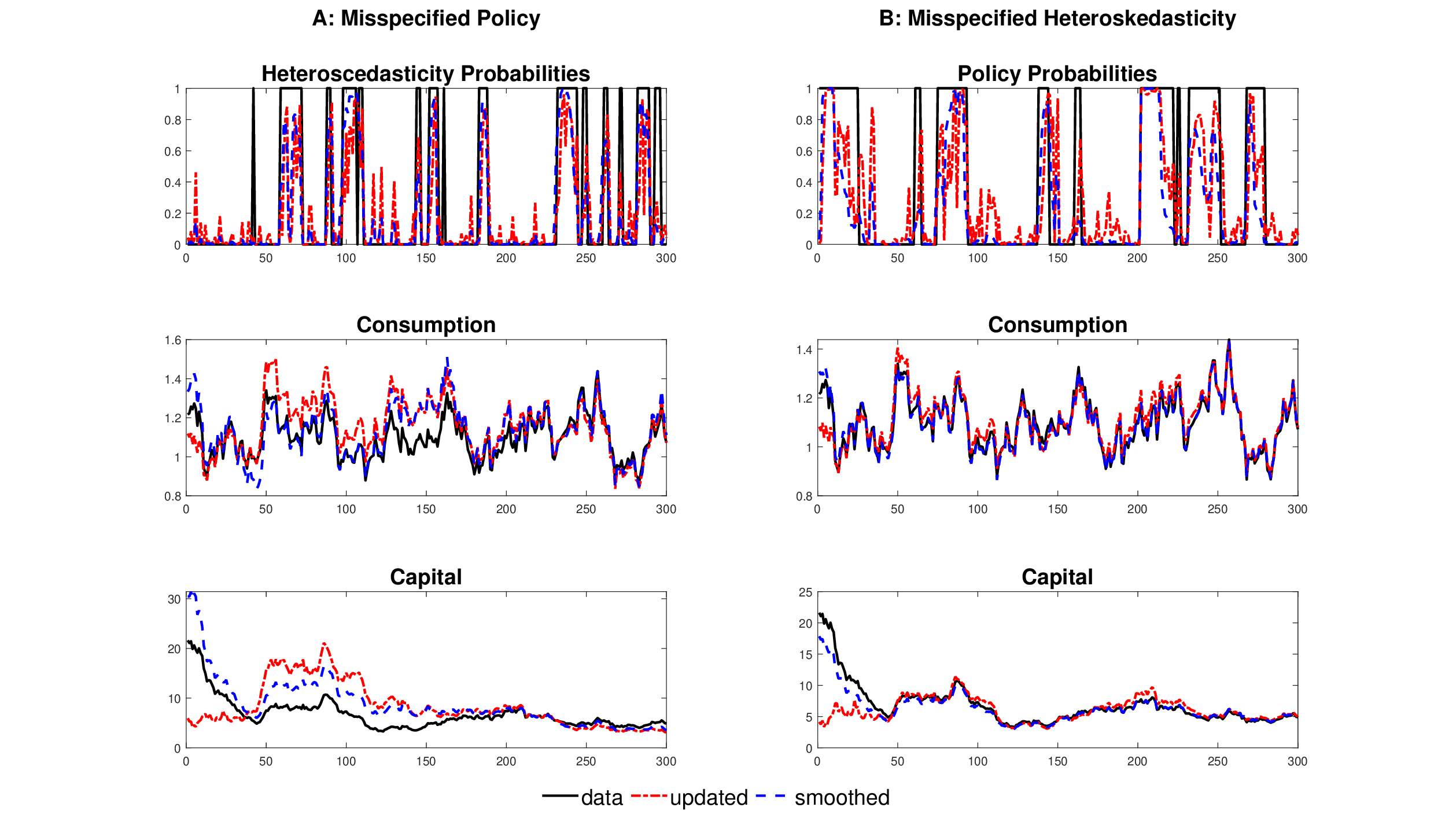}
    \caption{Model Misspecifications.}
    \label{fig:FigMissp}
    \end{center}
\end{figure}

%TCIMACRO{\TeXButton{B}{\begin{table}[h] \centering}}%
%BeginExpansion
\begin{table}[h] \centering%
%EndExpansion
\caption{MRSEs of smoothed variables $R_{t|1000}$. }%
\begin{tabular}{lcccc}
\hline\hline
vars: & No & Missp-d & Missp-d & Missp-d \\ 
& missp-b & Policy H & Shocks L & Shocks H \\ 
& (1) & (2) & (3) & (4) \\ \hline
consumption & $\underset{[0.032]}{0.023}$ & $\underset{[0.070]}{0.057}$ & $%
\underset{[0.036]}{0.022}$ & $\underset{[0.035]}{0.021}$ \\ 
capital & $\underset{[0.226]}{0.178}$ & $\underset{[1.683]}{1.472}$ & $%
\underset{[0.259]}{0.186}$ & $\underset{[0.239]}{0.164}$ \\ 
output & $\underset{[0.030]}{0.017}$ & $\underset{[0.038]}{0.020}$ & $%
\underset{[0.032]}{0.012}$ & $\underset{[0.031]}{0.011}$ \\ 
real wage & $\underset{[0.002]}{0.002}$ & $\underset{[0.005]}{0.005}$ & $%
\underset{[0.002]}{0.002}$ & $\underset{[0.02]}{0.002}$ \\ 
Tobin's Q & $\underset{[0.010]}{0.007}$ & $\underset{[0.037]}{0.036}$ & $%
\underset{[0.013]}{0.011}$ & $\underset{[0.010]}{0.007}$ \\ 
investment & $\underset{[0.302]}{0.210}$ & $\underset{[2.533]}{2.151}$ & $%
\underset{[0.349]}{0.224}$ & $\underset{[0.317]}{0.182}$ \\ 
labour supply & $\underset{[0.029]}{0.017}$ & $\underset{[0.036]}{0.019}$ & $%
\underset{[0.031]}{0.011}$ & $\underset{[0.031]}{0.011}$ \\ 
preference shock & $\underset{[0.043]}{0.037}$ & $\underset{[0.113]}{0.112}$
& $\underset{[0.051]}{0.045}$ & $\underset{[0.044]}{0.037}$ \\ 
labour supply shock & $\underset{[0.070]}{0.044}$ & $\underset{[0.199]}{0.174%
}$ & $\underset{[0.080]}{0.045}$ & $\underset{[0.073]}{0.035}$ \\ 
technology shock & $\underset{[0.001]}{0.001}$ & $\underset{[0.002]}{0.002}$
& $\underset{[0.001]}{0.001}$ & $\underset{[0.001]}{0.001}$ \\ \hline
shock state probs & $\underset{[0.265]}{0.224}$ & $\underset{[0.280]}{0.247}$
& -- & -- \\ 
policy state probs & $\underset{[0.335]}{0.275}$ & -- & $\underset{[0.408]}{%
0.390}$ & $\underset{[0.339]}{0.282}$ \\ \hline\hline
\end{tabular}

Note: MRSEs of updated variables $\mathcal{R}_{t\mid t}$ are in square
brackets.\label{Tab Missp}%
%TCIMACRO{\TeXButton{E}{\end{table}}}%
%BeginExpansion
\end{table}%
%EndExpansion
In the first scenario, we assume that the researcher believes in a single
policy stance: the hawkish state.\footnote{%
This is the common assumption in constant-parameter DSGE model estimations,
see e.g. \citet*{CKL2017}} We then execute the filtering and smoothing
algorithm with $\gamma _{\pi }^{Base}$, which is the correct policy feedback
but only for one of the two policy Markov states. Panel A in Figure \ref{Fig
Missp} illustrates the outcomes of updating and smoothing for the same
simulation, focusing on the initial 300 observations. The updated and
smoothed variables are shown to accurately identify the patterns in the
data, although with larger errors than in the correctly specified model.

One can get further insights from Table \ref{Tab Missp} by comparing the
RMSEs for updated and smoothed variables obtained from 500 simulations of
the entire sample of 1000 observations in columns (1) and (2). Despite
effectively handling data patterns, RMSEs for smoothed latent variables show
a substantial increase. However, the RMSEs for smoothed heteroskedasticity
state probability show only a small increase, as Figure \ref{FigMissp} also
demonstrates.

In the second scenario, we assume that the researcher believes the
volatility is always low. Panel B of Figure \ref{FigMissp}, which shows the
results of this scenario, confirms that the filter correctly identifies the
patterns in the data. Column (3) in Table \ref{Tab Missp} further supports
this observation, as it shows a much smaller increase in RMSEs compared to
column (1), and even smaller estimation errors for some variables. At the
same time, the RMSE for smoothed probability of a policy state is higher
than in the correctly specified model. This suggests that incorrectly
specifying shock volatilities significantly worsens the identification of
policy states.

In the final experiment, reported in column (4), we revisit the second
scenario, but this time we assume that the researcher believes the
volatility is always high. Although the RMSEs for updated latent variables
in column (4) are higher than those in column (1), the RMSEs for smoothed
variables are sometimes lower than in the correctly specified model. The
unexpectedly superior performance of the misspecified model after smoothing
can be attributed to the larger variance of shocks. By allowing for a large
variance in the shocks distribution, it accommodates both large and small
shocks. The smoother then revises the estimated values using the complete
sample and adjusts the estimates by factoring in information about the
realized shocks.

\subsection{Interim Summary}

Overall, the findings in this section demonstrate the effectiveness of the
canonical IMM filter, particularly when combined with the appropriate
smoother, in enhancing the accuracy and efficiency of Bayesian estimation of
state-space models.

The canonical IMM outperforms the Kim and Nelson filter in terms of
computational speed while delivering comparable accuracy. The implementation
of the new smoothing algorithm with the IMM filter substantially enhances
precision in estimating latent variables, reducing errors by approximately
25\%{}{}. We do not find any substantial improvement in accuracy when\ using
higher order filters in our example. It is hard to predict whether the same
will be true for other models.

Our simulations confirm that, despite approximations, adding more
information improves the performance of the suggested filtering-smoothing
procedure. We find that, as long as the sample length remains above 200
observations, there is no reduction in the smoother's efficacy in reducing
RMSEs for updated variables. We find that the filter identifies
probabilities of more distinct policy regimes with higher accuracy.

Finally, we demonstrate that we can still successfully recover latent
variables even when the policy or shock volatility regimes in the model are
misspecified.

Having established the superiority of the canonical IMM paired with the
matching smoother, we focus on this filter and smoother in the empirical
application.

\section{Empirical Application\label{Sec Empirics}}

In this section, we further investigate the practicality of the IMM filter
with the corresponding smoother. We estimate a modified version of the
FGR2015 model but using the same data for 1959Q2-2013Q4 as in that paper
(see Table C2 in Appendix \ref{App Params}). In our estimation we impose
relatively wide priors and use the \textit{Artificial Bee Colony} algorithm
by \citet{KarabogaBasturk2007} for global optimisation.

Table \ref{Tab regimes est} displays the estimated mode of the distribution
of transition probabilities and policy parameters. We present the remaining
parameters in Table C1 in Appendix \ref{App Params}.

%TCIMACRO{\TeXButton{B}{\begin{table}[h] \centering}}%
%BeginExpansion
\begin{table}[h] \centering%
%EndExpansion
\caption{Estimation of parameters that govern the two Markov processes}%
\begin{tabular}{cc}
\multicolumn{2}{c}{Transition matrices} \\ \hline\hline
Shocks & Policy \\ \hline
$P_{s}=\overset{}{\left[ 
\begin{array}{cc}
0.939\,05 & 0.060946 \\ 
0.04625 & 0.95375%
\end{array}%
\right] }$ & $P_{p}=\underset{}{\left[ 
\begin{array}{cc}
0.983\,96 & 0.016039 \\ 
0.043428 & 0.956\,57%
\end{array}%
\right] }$ \\ \hline\hline
&  \\ 
\multicolumn{2}{c}{Parameters of Taylor Rule:} \\ \hline\hline
Hawkish Feedback $\gamma _{\pi }$ & 1.6574 \\ 
Dovish Feedback $\gamma _{\pi }$ & 0.93984 \\ \hline\hline
\end{tabular}%
\label{Tab regimes est}%
%TCIMACRO{\TeXButton{E}{\end{table}}}%
%BeginExpansion
\end{table}%
%EndExpansion

We note that both regime-switching processes are highly persistent, and
therefore their identification is likely to be correct, as suggested by our
simulation results. The policy process, in particular, shows that there is
only a 2\% probability of leaving the hawkish state.

\begin{figure} [!h]
    \begin{center}
    \includegraphics[width = 0.95\textwidth]{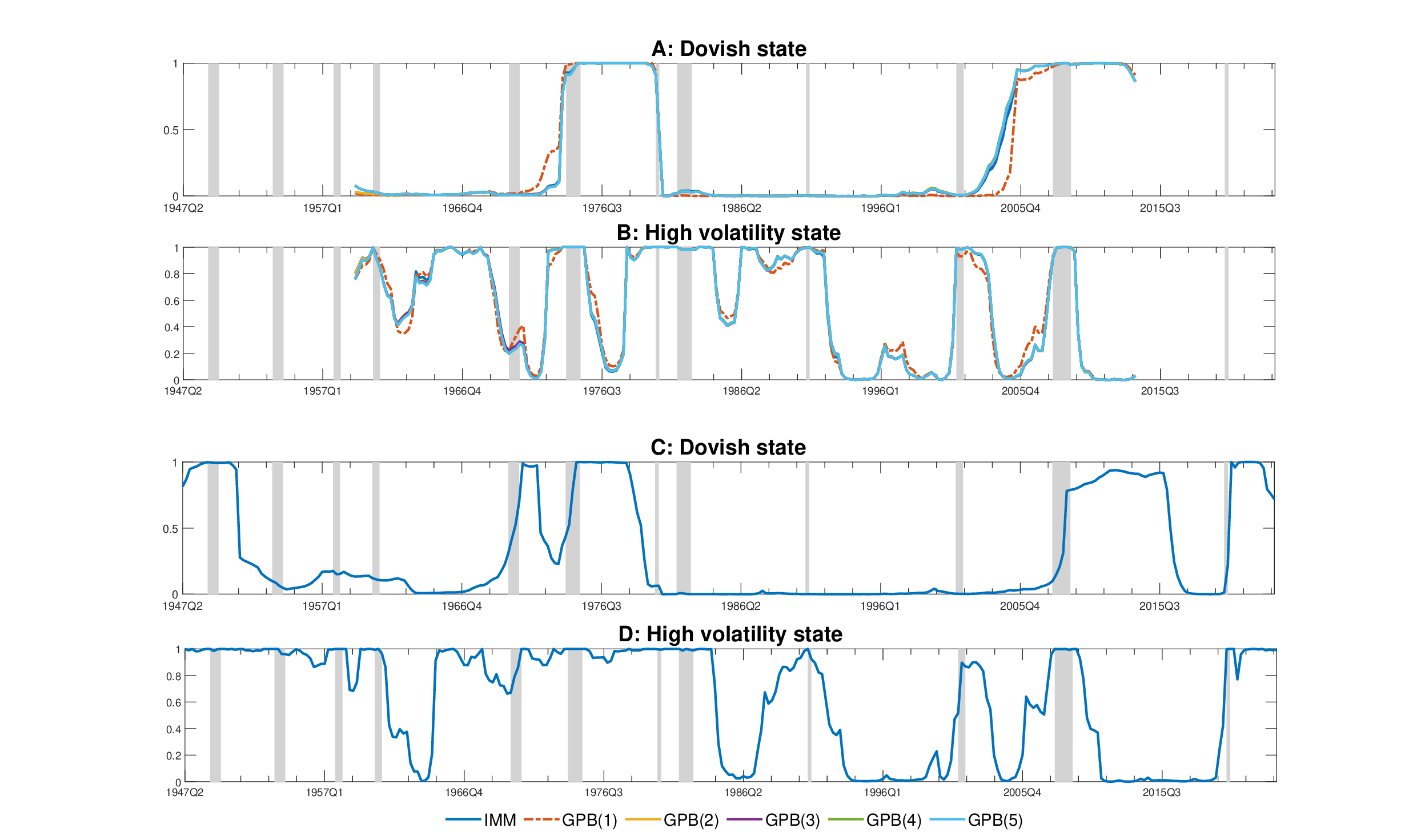}
    \caption{Smoothed State Probabilities.}
    \label{fig:FigEmpirics}
    \end{center}
\end{figure}

Panels A and B in Figure \ref{FigEmpirics} report the smoothed
probabilities of being in the dovish state and the high volatility state. We
used the canonical IMM at the estimation stage and six different filters at
the filtering stage.

One can notice that the lines plotted for six filters are very close to one
another. The GPB filters of order 2 to 5 produce nearly identical results
and they are also extremely close to those produced by the canonical IMM.
This suggests, first, that using a more computationally intensive
higher-order GPB filter does not necessarily improve regime identification
compared to the KN-GPB(2) filter, and, second, that the canonical IMM and
the KN filter are practically identical in accuracy. While GPB(1) stands out
as less accurate, it still identifies all main events similarly to the other
filters.

Panel A shows the probability of being in the dovish policy state. Note that
it indicates that our approach succeded in identifying all major changes in
the US post-war policy stance: the Great Inflation, the Volcker
Disinflation, the Great Moderation, the Great Financial Crisis, and the
subsequent Zero Lower Bound (ZLB) period. We did not assume a special regime
for ZLB monetary policy but identify this period as a dovish state.

Panel B shows the probability of being in the high volatility state. Our
approach correctly identifies most of the recessions and suggests that the
pre-1990s period experienced larger shocks than the more recent past.

For panels C and D the dataset includes the period 1947Q2-2023Q3 (see
Appendix \ref{App Params} for details). The extended data covers a longer
period, adding observations at the beginning, which should improve the
identification of the Great Inflation episode, and at the end, which
includes the post-Covid period with rising inflation in 2022-23. In these
two panels we only show the results obtained using the IMM filter (with the
associated smoother) to this extended dataset.

The message is similar to what is suggested by panels A and B. We identify
the dovish state during the ZLB and a shift to hawkish policy a year after
the ZLB lift-off. In addition, we see the return to the dovish policy during
the Covid-19 pandemic which lasted until 2023Q1, at which time tough
measures against inflation were taken. The post-Covid period is also
characterised by relatively large shocks.

\section{Conclusions\label{Sec Conclusions}}

Our focus in this paper has been on improving multiple-regime Bayesian
filtering techniques, alongside the development of multiple-regime smoothers.

We introduced the family of IMM filters, along with an extension of the Kim
and Nelson filter, to accommodate tracking of longer regime histories. In
addition, we developed a robust smoothing algorithm that can be adapted to
these extended filters. Our simulation exercises demonstrate that the IMM
filter with our proposed smoother deliver the best combination of
computational speed and accuracy in a prototypical macroeconomic application
of Bayesian filtering.

Our paper provides a comprehensive toolkit for researchers working with
complex macroeconomic models. We demonstrate its practical relevance in an
empirical application using a NK DSGE\ model with long U.S. macroeconomic
time series.

\bibliographystyle{chicago}
\bibliography{demo}

\pagebreak

\appendix

\begin{center}
{\LARGE Online Appendix}

{\Large to}

{\Large On Bayesian Filtering for Markov Regime Switching Models}

{\large by}

{\large Nigar Hashimzade Oleg Kirsanov Tatiana Kirsanova Junior Maih}{\Large %
\ }
\end{center}

\section{Selected Algorithms\label{App Filters}}

Let $M_{j,t}^{y}=\left \{ Z_{j,t},c_{y,j,t},T_{j,t},c_{\alpha
,j,t},g_{j,t},R_{j,t};y_{t}\right \} $ be state-space system matrices for
regime $j$ and information at time $t.$ Let $\mathcal{K}()$ be a KF
operator. The filtering algorithms are summarised in Tables A1-A2. Smoothing
algorithms are summarised in Table A3.

\begin{center}
\begin{tabular}{c|c}
\multicolumn{2}{c}{Table A1$\underset{}{}$: GPB Filtering Algorithms} \\ 
\hline\hline
GPB(1) & GPB(2) \\ \hline
\multicolumn{2}{c}{Regime probabilities} \\ \hline
$\mu _{t\mid t}^{j}:=\func{Pr}\left[ s_{t}=j\mid Y_{t}\right] $ & $\mu
_{t-1\mid t}^{ij}=\func{Pr}\left[ s_{t-1}=i,s_{t}=j\mid Y_{t}\right] $ \\ 
\multicolumn{1}{l|}{} & \multicolumn{1}{|l}{$\mu _{t\mid
t}^{j}=\sum_{i=1}^{h}\mu _{t-1\mid t}^{ij}$} \\ \hline
\multicolumn{2}{c}{Initialisation} \\ \hline
\multicolumn{1}{l|}{$\alpha _{t-1\mid t-1},P_{t-1\mid t-1},\mu _{t-1\mid
t-1}^{j}$} & \multicolumn{1}{|l}{$\alpha _{t-1\mid t-1}^{i},P_{t-1\mid
t-1}^{i},\mu _{t-1\mid t-1}^{i}$} \\ \hline
\multicolumn{2}{c}{Filtering and Updating} \\ \hline
\multicolumn{1}{l|}{$\left[ \alpha _{t\mid t-1}^{j},P_{t\mid
t-1}^{j},v_{t\mid t-1}^{j},\alpha _{t\mid t}^{j},P_{t\mid t}^{j}\right] $} & 
\multicolumn{1}{|l}{$\left[ \alpha _{t\mid t-1}^{ij},P_{t\mid
t-1}^{ij},v_{t\mid t-1}^{ij},\alpha _{t\mid t}^{ij},P_{t\mid t}^{ij}\right] $%
} \\ 
$=\mathcal{K}\left( M_{j,t}^{y};\alpha _{t-1\mid t-1},P_{t-1\mid t-1}\right) 
$ & $=\mathcal{K}\left( M_{j,t}^{y};\alpha _{t-1\mid t-1}^{i},P_{t-1\mid
t-1}^{i}\right) $ \\ 
\multicolumn{1}{l|}{$\Lambda _{t}^{j}=(2\pi )^{-t/2}\left \vert
F_{j,t}\right \vert ^{-1/2}e^{-\frac{1}{2}v_{t\mid t-1}^{j\prime
}F_{j.t}^{^{-1}}v_{t\mid t-1}^{j}}$} & \multicolumn{1}{|l}{$\Lambda
_{t}^{ij}=(2\pi )^{-t/2}\left \vert F_{i,j,t}\right \vert ^{-1/2}e^{-\frac{1%
}{2}v_{t\mid t-1}^{ij\prime }F_{i,j,t}^{^{-1}}v_{t\mid t-1}^{ij}}$} \\ \hline
\multicolumn{2}{c}{Collapsing (dimension reduction) and Probabilities update}
\\ \hline
\multicolumn{1}{l|}{$\mu _{t\mid t}^{j}=\frac{\Lambda
_{t}^{j}\sum_{i=1}^{h}Q_{t-1,t}^{ij}\mu _{t-1\mid t-1}^{i}}{%
\sum_{k,m=1}^{h}\Lambda _{t}^{m}Q_{t-1,t}^{km}\mu _{t-1\mid t-1}^{k}}$} & 
\multicolumn{1}{|l}{$\mu _{t-1\mid t}^{ij}=\frac{\Lambda
_{t}^{ij}Q_{t-1,t}^{ij}\mu _{t-1\mid t-1}^{i}}{\sum_{k=1}^{h}\Lambda
_{t}^{kj}Q_{t-1,t}^{kj}\mu _{t-1\mid t-1}^{k}}$} \\ 
\multicolumn{1}{l|}{$\alpha _{t\mid t}=\sum_{j=1}^{h}\mu _{t\mid
t}^{j}\alpha _{t\mid t}^{j}$} & \multicolumn{1}{|l}{$\alpha _{t\mid
t}^{j}=\sum_{i=1}^{h}\mu _{t-1\mid t}^{ij}\alpha _{t\mid t}^{ij}$} \\ 
\multicolumn{1}{l|}{$P_{t\mid t}=\sum_{i=1}^{h}\mu _{t\mid t}^{j}\left(
P_{t\mid t}^{j}\right. $} & \multicolumn{1}{|l}{$P_{t\mid
t}^{j}=\sum_{i=1}^{h}\mu _{t-1\mid t}^{ij}\left( P_{t\mid t}^{ij}\right. $}
\\ 
$\left. +\left( \alpha _{t\mid t}-\alpha _{t\mid t}^{j}\right) \left( \alpha
_{t\mid t}-\alpha _{t\mid t}^{j}\right) ^{\prime }\right) $ & $\left.
+\left( \alpha _{t\mid t}^{j}-\alpha _{t\mid t}^{ij}\right) \left( \alpha
_{t\mid t}^{j}-\alpha _{t\mid t}^{ij}\right) ^{\prime }\right) $ \\ 
\multicolumn{1}{l|}{} & \multicolumn{1}{|l}{$\mu _{t\mid
t}^{j}=\sum_{i=1}^{h}\mu _{t-1\mid t}^{ij}$} \\ \hline\hline
\end{tabular}

\begin{tabular}{c|c}
\multicolumn{2}{c}{Table A2:$\underset{}{}$ IMM Filtering Algorithms} \\ 
\hline\hline
IMM(1) & IMM(2) \\ \hline
\multicolumn{2}{c}{Regime probabilities} \\ \hline
$\mu _{t\mid t}^{j}:=\func{Pr}\left[ s_{t}=j\mid Y_{t}\right] $ & $\mu
_{t-1\mid t}^{ij}=\func{Pr}\left[ s_{t-1}=i,s_{t}=j\mid Y_{t}\right] $ \\ 
\hline
\multicolumn{2}{c}{Initialisation} \\ \hline
\multicolumn{1}{l|}{$\alpha _{t-1\mid t-1}^{i},P_{t-1\mid t-1}^{i},\mu
_{t-1\mid t-1}^{i}$} & \multicolumn{1}{|l}{$\alpha _{t-1\mid
t-1}^{ki},P_{t-1\mid t-1}^{ki},\mu _{t-1\mid t-1}^{ki}$} \\ \hline
\multicolumn{2}{c}{Mixing (dimension reduction)} \\ \hline
\multicolumn{1}{l|}{$\mu _{t-1\mid t-1}^{i\mid j}=\frac{Q_{t-1,t}^{ij}\mu
_{t-1\mid t-1}^{i}}{\sum_{k=1}^{h}Q_{t-1,t}^{kj}\mu _{t-1\mid t-1}^{k}}$} & 
\multicolumn{1}{|l}{$\mu _{t-1\mid t-1}^{ki\mid ij}=\frac{\mathcal{Q}%
_{k(h-1)+i,i(h-1)+j}\mu _{t-2\mid t-1}^{ki}}{\sum_{m,l=1}^{h}\mathcal{Q}%
_{m(h-1)+l,l(h-1)+j}\mu _{t-2\mid t-1}^{ml}}$} \\ 
\multicolumn{1}{l|}{$\alpha _{t-1\mid t-1}^{0j}=\sum_{i=1}^{h}\mu _{t-1\mid
t-1}^{i\mid j}\alpha _{t-1\mid t-1}^{i}$} & \multicolumn{1}{|l}{$\alpha
_{t-1\mid t-1}^{0ij}=\sum_{k,i=1}^{h}\mu _{t-1\mid t-1}^{ki\mid ij}\alpha
_{t-1\mid t-1}^{ki}$} \\ 
\multicolumn{1}{l|}{$P_{t-1\mid t-1}^{0j}=\sum_{i=1}^{h}\mu _{t-1\mid
t-1}^{i\mid j}\left( P_{t-1\mid t-1}^{i}\right. $} & \multicolumn{1}{|l}{$%
P_{t-1\mid t-1}^{0ij}=\sum_{k,i=1}^{h}\mu _{t-1\mid t-1}^{ki\mid ij}\left(
P_{t-1\mid t-1}^{ki}\right. $} \\ 
$+\left( \alpha _{t-1\mid t-1}^{i}-\alpha _{t-1\mid t-1}^{0i}\right) $ & $%
+\left( \alpha _{t-1\mid t-1}^{ki}-\alpha _{t-1\mid t-1}^{0ki}\right) $ \\ 
$\left. \times \left( \alpha _{t-1\mid t-1}^{i}-\alpha _{t-1\mid
t-1}^{0i}\right) ^{\prime }\right) $ & $\left. \times \left( \alpha
_{t-1\mid t-1}^{ki}-\alpha _{t-1\mid t-1}^{0ki}\right) ^{\prime }\right) $
\\ \hline
\multicolumn{2}{c}{Filtering and Updating} \\ \hline
\multicolumn{1}{l|}{$\left[ \alpha _{t\mid t-1}^{j},P_{t\mid
t-1}^{j},v_{t\mid t-1}^{j},\alpha _{t\mid t}^{j},P_{t\mid t}^{j}\right] $} & 
\multicolumn{1}{|l}{$\left[ \alpha _{t\mid t-1}^{ij},P_{t\mid
t-1}^{ij},v_{t\mid t-1}^{ij},\alpha _{t\mid t}^{ij},P_{t\mid t}^{ij}\right] $%
} \\ 
\multicolumn{1}{l|}{$=\mathcal{K}\left( M_{j,t}^{y};\alpha _{t-1\mid
t-1}^{0j},P_{t-1\mid t-1}^{0j}\right) $} & \multicolumn{1}{|l}{$=\mathcal{K}%
\left( M_{j,t}^{y};\alpha _{t-1\mid t-1}^{0ij},P_{t-1\mid t-1}^{0ij}\right) $%
} \\ 
\multicolumn{1}{l|}{$\Lambda _{t}^{j}=(2\pi )^{-t/2}\left \vert
F_{j,t}\right \vert ^{-1/2}e^{-\frac{1}{2}v_{t\mid t-1}^{j\prime
}F_{j,t}^{^{-1}}v_{t\mid t-1}^{j}}$} & \multicolumn{1}{|l}{$\Lambda
_{t}^{ij}=(2\pi )^{-t/2}\left \vert F_{ij,t}\right \vert ^{-1/2}e^{-\frac{1}{%
2}v_{t\mid t-1}^{ij\prime }F_{ij,t}^{^{-1}}v_{t\mid t-1}^{ij}}$} \\ \hline
\multicolumn{2}{c}{Probabilities update} \\ \hline
\multicolumn{1}{l|}{$\mu _{t\mid t}^{j}=\frac{\Lambda
_{t}^{j}\sum_{i=1}^{h}Q_{t-1,t}^{ij}\mu _{t-1\mid t-1}^{i}}{%
\sum_{k,m=1}^{h}\Lambda _{t}^{m}Q_{t-1,t}^{km}\mu _{t-1\mid t-1}^{k}}$} & 
\multicolumn{1}{|l}{$\mu _{t-1\mid t}^{ij}=\frac{\Lambda
_{t}^{ij}Q_{t-1,t}^{ij}\mu _{t-1\mid t-1}^{i}}{\sum_{k=1}^{h}\Lambda
_{t}^{kj}Q_{t-1,t}^{kj}\mu _{t-1\mid t-1}^{k}}$} \\ \hline\hline
\end{tabular}

\begin{tabular}{cc}
\multicolumn{2}{c}{Table A3: $\underset{}{}$Smoothing Algorithms} \\ 
\hline\hline
GPB(1) and IMM(1) & GPB(2) and IMM(2) \\ \hline
\multicolumn{2}{c}{\textit{Smoothed Probabilities}} \\ 
\multicolumn{2}{c}{$\mu _{t\mid n}^{j}=\sum_{k=1}^{h}\mu _{t+1\mid n}^{k}%
\frac{\mu _{t\mid t}^{j}Q_{t,t+1}^{jk}}{\sum_{m=1}^{h}Q_{t,t+1}^{mk}\mu
_{t\mid t}^{m}}$} \\ \hline
\multicolumn{2}{c}{\textit{Smoothed States}} \\ 
\multicolumn{2}{c}{Initialisaion} \\ \hline
\multicolumn{1}{l}{$r_{n\mid n-1}^{i}=Z_{i,n}^{\prime }F_{i,n}^{-1}v_{n\mid
n-1}^{i}$} & \multicolumn{1}{|l}{$r_{n\mid n-1}^{ij}=Z_{j,n}^{\prime
}F_{ij,n}^{-1}v_{n\mid n-1}^{ij}$} \\ \hline
\multicolumn{2}{c}{Recursion} \\ \hline
\multicolumn{1}{l}{$L_{t+1,t}^{ij}=T_{j,t+1}\left( I-K_{i,t}Z_{i,t}\right) $}
& \multicolumn{1}{|l}{$L_{t+1,t}^{ijk}=T_{k,t+1}\left(
I-K_{i,j,t}Z_{j,t}\right) $} \\ 
\multicolumn{1}{l}{$r_{t\mid t-1}^{i}=Z_{i,t}^{\prime }F_{i,t}^{-1}v_{t\mid
t-1}^{i}$} & \multicolumn{1}{|l}{$r_{t\mid t-1}^{ij}=Z_{t,j}^{\prime
}F_{i,j,t}^{-1}v_{t\mid t-1}^{ij}$} \\ 
\multicolumn{1}{l}{$+\sum_{j=1}^{h}Q_{t,t+1}^{ij}L_{t+1,t}^{ij\prime
}r_{t+1\mid t}^{j}$} & \multicolumn{1}{|l}{$%
+\sum_{k=1}^{h}Q_{t,t+1}^{jk}L_{t+1,t}^{ijk\prime }r_{t+1\mid t}^{jk}$} \\ 
$\alpha _{t\mid n}^{i}=\alpha _{t\mid t-1}^{i}+P_{t\mid t-1}^{i}r_{t\mid
t-1}^{i}$ & \multicolumn{1}{|c}{$\alpha _{t\mid n}^{ij}=\alpha _{t\mid
t-1}^{ij}+P_{t\mid t-1}^{ij}r_{t\mid t-1}^{ij}$} \\ \hline
\multicolumn{2}{c}{Merge states} \\ \hline
\multicolumn{1}{l}{} & \multicolumn{1}{|l}{$\alpha _{t\mid
n}^{j}=\sum_{i=1}^{h}\mu _{t-1|t}^{ij}\alpha _{t\mid n}^{ij}$} \\ 
\multicolumn{1}{l}{$\alpha _{t\mid n}=\sum_{i=1}^{h}\mu _{t\mid n}^{i}\alpha
_{t\mid n}^{i}$} & \multicolumn{1}{|l}{$\alpha _{t\mid n}=\sum_{i=1}^{h}\mu
_{t\mid n}^{i}\alpha _{t\mid n}^{i}$} \\ \hline\hline
\end{tabular}
\end{center}

\section{The Model\label{App model}}

This section summarises the model in \citet{FVGQRR}. We present the list of
variables and all the model equations. We then present parameterisation of
the model used in Section \ref{Sec Simulations}, and estimated parameters
obtained in the empirical investigation discussed in Section \ref{Sec
Empirics}.

\begin{center}
\begin{tabular}{llll}
\multicolumn{4}{c}{Table B1: List of Variables} \\ \hline\hline
$d_{t}$ & Shifter to intertemp. preference & $C_{t}$ & Consumption \\ 
$G_{t}$ & Government consumption & $\Lambda _{t}$ & Marginal utility of
consumption \\ 
$r_{t}$ & gross nominal interest rate & $R_{kt}$ & Rental rate of capital \\ 
$\pi _{t}$ & Gross inflation & $\phi _{t}$ & Cost of use of capital \\ 
$Q_{t}$ & Tobin's Q & $\phi _{t}^{\prime }$ & derivative of the capital adj.
cost \\ 
$X_{t}$ & Investment & $u_{t}$ & capital utilization \\ 
$s_{t}$ & Investment adjustment cost & $s_{t}^{\prime }$ & derivative of
invest. adj. cost \\ 
$f_{t}$ & Calvo wage parameter & $W_{\ast ,t}$ & Optimal real wage \\ 
$W_{t}$ & real wage & $l_{d,t}$ & labor demand \\ 
$\varphi _{t}$ & labor supply shifter & $\pi _{\ast w,t}$ & Relative optimal
real wage \\ 
$g_{1,t}$ & Calvo price process 1 & $\pi _{\ast ,t}$ & Relative Price \\ 
$g_{2,t}$ & Calvo price process 2 & $mc_{t}$ & Real marginal cost \\ 
$Y_{d,t}$ & Output & $v_{p,t}$ & Price dispersion \\ 
$K_{t}$ & Capital & $A_{t}$ & Neutral technology \\ 
$Z_{t}$ & Combined technology & $MU_{t}$ & Investment-specific tech. level
\\ 
$v_{w,t}$ & Wage dispersion & $l_{t}$ & hours worked/labor supply \\ 
$\varepsilon _{\xi ,t}$ & Monetary policy shock, scale $\sigma _{\xi }$ & $%
\varepsilon _{\varphi ,t}$ & labor supply shock, with scale $\sigma
_{\varphi }$ \\ 
$\varepsilon _{g,t}$ & Government spending shock, scale $\sigma _{g}$ & $%
\varepsilon _{\mu ,t}$ & Invest.-spec. technology shock, scale $\sigma _{\mu
}$ \\ 
$\varepsilon _{d,t}$ & Preference shock, scale $\sigma _{d}$ & $\varepsilon
_{A,t}$ & Neutral technology shock, scale $\sigma _{a}$ \\ \hline\hline
\end{tabular}

\begin{tabular}{ll}
\multicolumn{2}{c}{Table B2$\underset{}{}$: Model Equations} \\ \hline\hline
\multicolumn{2}{l}{\textit{Households}} \\ \hline
Capital accum-n & $K_{t}=\left( 1-\delta \right) K_{t-1}+MU_{t}\left( 1-s%
\left[ \frac{X_{t}}{X_{t-1}}\right] \right) X_{t}$ \\ 
FOC consum-n & $\frac{d_{t}}{C_{t}-hC_{t-1}}-h\beta E_{t}\frac{d_{t+1}}{%
C_{t+1}-hC_{t}}=\Lambda _{t}$ \\ 
FOC bonds & $\Lambda _{t}=\beta E_{t}\Lambda _{t+1}\frac{r_{t}}{\pi _{t+1}}$
\\ 
FOC capital util. & $R_{kt}=\frac{\phi ^{\prime }\left[ u_{t}\right] }{MU_{t}%
}$ \\ 
FOC capital & $Q_{t}=\beta E_{t}\frac{\Lambda _{t+1}}{\Lambda _{t}}\left(
\left( 1-\delta \right) Q_{t+1}+R_{kt+1}u_{t+1}-\frac{\phi \left[ u_{t+1}%
\right] }{MU_{t+1}}\right) $ \\ 
Capital util-n & $\phi \left[ u\right] =\phi _{1}\left( u-1\right) +\frac{%
\phi _{2}}{2}\left( u-1\right) ^{2}$ \\ 
\ \ its derivative & $\phi ^{\prime }\left[ u\right] =\phi _{1}+\frac{\phi
_{2}}{2}\left( u-1\right) $ \\ 
FOC investment & $1=Q_{t}MU_{t}\left( 1-s\left[ \frac{X_{t}}{X_{t-1}}\right]
-s^{\prime }\left[ \frac{X_{t}}{X_{t-1}}\right] \frac{X_{t}}{X_{t-1}}\right) 
$ \\ 
& $+\beta E_{t}Q_{t+1}MU_{t+1}\frac{\Lambda _{t+1}}{\Lambda _{t}}s^{\prime }%
\left[ \frac{X_{t+1}}{X_{t}}\right] \left( \frac{X_{t+1}}{X_{t}}\right) ^{2}$
\\ 
Invest. adj. cost & $s\left[ \frac{X_{t}}{X_{t-1}}\right] =\frac{\kappa }{2}%
\left( \frac{X_{t}}{X_{t-1}}-\lambda _{x}\right) ^{2}$ \\ 
\ \ its derivative & $s^{\prime }\left[ \frac{X_{t}}{X_{t-1}}\right] =\kappa
\left( \frac{X_{t}}{X_{t-1}}-\lambda _{x}\right) $ \\ \hline
\multicolumn{2}{l}{\textit{Firms}} \\ \hline
Wage helper 1 & $f_{t}=\frac{\eta -1}{\eta }\left( W_{\ast ,t}\right)
^{1-\eta }\Lambda _{t}W_{t}^{\eta }l_{d,t}+\beta \theta _{w}E_{t}\left( 
\frac{\pi _{t}^{\chi _{w}}}{\pi _{t+1}}\right) ^{1-\eta }\left( \frac{%
W_{\ast ,t+1}}{W_{\ast ,t}}\right) ^{\eta -1}f_{t+1}$ \\ 
Wage helper 2 & $f_{t}=\psi d_{t}\varphi _{t}\pi _{\ast w,t}^{-\eta \left(
1+\vartheta \right) }l_{d,t}^{\left( 1+\vartheta \right) }+\beta \theta
_{w}E_{t}\left( \frac{\pi _{t}^{\chi _{w}}}{\pi _{t+1}}\right) ^{-\eta
\left( 1+\vartheta \right) }\left( \frac{W_{\ast ,t+1}}{W_{\ast ,t}}\right)
^{\eta \left( 1+\vartheta \right) }f_{t+1}$ \\ 
Wage setting & $\pi _{\ast w,t}=\frac{W_{\ast ,t}}{W_{t}}$ \\ 
Wage dynamics & $1=\theta _{w}\left( \frac{\pi _{t-1}^{\chi _{w}}}{\pi _{t}}%
\right) ^{1-\eta }\left( \frac{W_{t-1}}{W_{t}}\right) ^{1-\eta }+\left(
1-\theta _{w}\right) \pi _{\ast w,t}^{1-\eta }$ \\ 
Wage dispersion & $v_{w,t}=\theta _{w}\left( \frac{W_{t-1}}{W_{t}}\frac{\pi
_{t-1}^{\chi _{w}}}{\pi _{t}}\right) ^{-\eta }v_{w,t-1}+\left( 1-\theta
_{w}\right) \pi _{\ast w,t}^{-\eta }$ \\ 
Price helper 1 & $g_{1,t}=\Lambda _{t}mc_{t}Y_{d,t}+\beta \theta
_{p}E_{t}\left( \frac{\pi _{t}^{\chi }}{\pi _{t+1}}\right) ^{-\varepsilon
}g_{1,t+1}$ \\ 
Price helper 2 & $g_{2,t}=\Lambda _{t}\pi _{\ast ,t}Y_{d,t}+\beta \theta
_{p}E_{t}\left( \frac{\pi _{t}^{\chi }}{\pi _{t+1}}\right) ^{1-\varepsilon
}\left( \frac{\pi _{\ast ,t}}{\pi _{\ast ,t+1}}\right) g_{2,t+1}$ \\ 
Price setting & $\varepsilon g_{1,t}=\left( \varepsilon -1\right) g_{2,t}$
\\ 
Price dynamics & $1=\theta _{p}\left( \frac{\pi _{t-1}^{\chi }}{\pi _{t}}%
\right) ^{1-\varepsilon }+\left( 1-\theta _{p}\right) \pi _{\ast
,t}^{1-\varepsilon }$ \\ 
Price dispersion & $v_{p,t}=\theta _{p}\left( \frac{\pi _{t-1}^{\chi }}{\pi
_{t}}\right) ^{-\varepsilon }v_{p,t-1}+\left( 1-\theta _{p}\right) \pi
_{\ast ,t}^{-\varepsilon }$ \\ \hline\hline
\multicolumn{2}{r}{continued on the next page}%
\end{tabular}
\end{center}

\begin{tabular}{ll}
\multicolumn{2}{c}{Table B2$\underset{}{}$: Model Equations -- continued} \\ 
\hline\hline
\multicolumn{2}{l}{\textit{Market Clearing and Policy}} \\ \hline
Production function & $Y_{d,t}=\frac{A_{t}\left( u_{t}K_{t-1}\right)
^{\alpha }\left( l_{d,t}\right) ^{1-\alpha }-\phi _{y}Z_{t}}{v_{p,t}}$ \\ 
Capital-labor ratio & $\frac{u_{t}K_{t-1}}{l_{d,t}}=\frac{\alpha }{1-\alpha }%
\frac{W_{t}}{R_{kt}}$ \\ 
Aggregate labour & $l_{t}=v_{w,t}l_{d,t}$ \\ 
Resource constraint & $Y_{d,t}=C_{t}+G_{t}+X_{t}+\frac{\phi \left[ u_{t}%
\right] }{MU_{t}}K_{t-1}$ \\ 
Marginal costs & $mc_{t}=\left( \frac{1}{1-\alpha }\right) ^{1-\alpha
}\left( \frac{1}{\alpha }\right) ^{\alpha }\frac{W_{t}^{1-\alpha
}R_{kt}^{\alpha }}{A_{t}}$ \\ 
Taylor rule & $\frac{r_{t}}{r_{ss}}=\left( \frac{r_{t-1}}{r_{ss}}\right)
^{\gamma _{r}}\left( \left( \frac{\pi _{t}}{\pi _{\text{targ}}}\right)
^{\gamma _{\pi }}\left( \frac{Y_{d,t}}{\lambda _{yd}Y_{d,t-1}}\right)
^{\gamma _{y}}\right) ^{1-\gamma _{r}}\exp \left( \sigma _{\xi }\varepsilon
_{\xi ,t}\right) $ \\ 
Government spending & $\log \left( \frac{G_{t}}{Z_{t}}\right) =\left( 1-\rho
_{g}\right) \log g+\rho _{g}\log \left( \frac{G_{t-1}}{Z_{t-1}}\right)
+\sigma _{g}\varepsilon _{g,t}$ \\ \hline
\multicolumn{2}{l}{\textit{Exogenous processes}} \\ \hline
Intertemporal preference & $\log \left( d_{t}\right) =\rho _{d}\log \left(
d_{t-1}\right) +\sigma _{d}\varepsilon _{d,t}$ \\ 
Labor supply & $\log \left( \varphi _{t}\right) =\rho _{\varphi }\log \left(
\varphi _{t-1}\right) +\sigma _{\varphi }\varepsilon _{\varphi ,t}$ \\ 
Investment-spec. technology & $MU_{t}=MU_{t-1}\exp \left( \lambda _{\mu
}+\sigma _{\mu }\varepsilon _{\mu ,t}\right) $ \\ 
Neutral technology & $A_{t}=A_{t-1}\exp \left( \lambda _{a}+\sigma
_{a}\varepsilon _{A,t}\right) $ \\ 
Combined technology & $Z_{t}=A_{t}^{\frac{1}{1-\alpha }}MU_{t}^{\frac{\alpha 
}{1-\alpha }}$ \\ \hline\hline
\end{tabular}

\section{Model Parameters and Empirics\label{App Params}}

\begin{center}
\begin{tabular}{llcc}
\multicolumn{4}{c}{Table C1: $\underset{}{}$Model Parameters} \\ \hline\hline
Parameters & Description & FGR2015 & Estimated \\ 
&  & Values & Values \\ \hline
&  & (1) & (2) \\ \hline
$\beta $ & Time Preference & 0.99 & 0.9992 \\ 
$h$ & Habit Formation & 0.9 & 0.92747 \\ 
psi & labor supply coeff in utility & 8.0 &  \\ 
vartheta & Disutilty of Labor Scaling & 1.17 &  \\ 
$\delta $ & Depreciation Rate & 0.025 &  \\ 
$\alpha $ & Captial Share in Production & 0.21 & 0.14991 \\ 
$\kappa $ & Weight on Investment Adjustment Costs & 9.5 & 3.7946 \\ 
$\varepsilon $ & Elast. of Subst. btw. Differntiated Goods & 10 &  \\ 
eta & Elast. of Subst. btw Diff. Types of Labour & 10 &  \\ 
phi2 & Weight on Adj. Costs for Capital Utilization & 0.001 &  \\ 
$\chi _{w}$ & Wage Indexation & 0.6340 &  \\ 
$\chi $ & Price Indexation & 0.6186 & 0.00011223 \\ 
$\theta _{w}$ & Probability of not changing wages & 0.6869 &  \\ 
$\theta _{p}$ & Probability of not changing prices & 0.8139 & 0.8379 \\ 
\hline\hline
\multicolumn{4}{r}{continued on the next page}%
\end{tabular}

\begin{tabular}{llcc}
\multicolumn{4}{c}{Table C1: $\underset{}{}$Model Parameters -- continued}
\\ \hline\hline
Parameters & Description & FGR2015 & Estimated \\ 
&  & Values & Values \\ \hline
&  & (1) & (2) \\ \hline
\multicolumn{4}{l}{Policy Parameters} \\ \hline
$\gamma _{r}\left( S_{P=1}\right) $ & Interest rate smoothing & 0.7855 & 
0.80302 \\ 
$\gamma _{r}\left( S_{P=2}\right) $ & Interest rate smoothing & 0.7855 & 
0.87472 \\ 
$\gamma _{y}\left( S_{P=1}\right) $ & Reaction to output growth & 
exp(-1.4034) & 0.41649 \\ 
$\gamma _{y}\left( S_{P=2}\right) $ & Reaction to output growth & 
exp(-1.4034) & 0.32918 \\ 
$\pi _{\text{targ}}$ & Inlfation target & 1.0005 & 1.0057 \\ \hline
\multicolumn{4}{l}{Persistence of Shocks} \\ \hline
$\rho _{d}$ & Consumption preference & 0.1182 & 0.72554 \\ 
$\rho _{\varphi }$ & Labor supply & 0.9331 & 0.92016 \\ 
$\lambda _{a}$ & Neutral technology & 0.0028 &  \\ 
$\lambda _{\mu }$ & Investment-specific technology & 0.0034 &  \\ 
$\rho _{g}$ & Government spending shock & 0.75 & 0.0024123 \\ \hline
\multicolumn{4}{l}{Standard Deviation of Shocks} \\ \hline
$\sigma _{d}\left( S_{V=1}\right) $ & Consumption Preference Regime 1 & 
exp(-1.9834) & 0.031211 \\ 
$\sigma _{d}\left( S_{V=2}\right) ^{\ast }$ & Consumption Preference Regime 2
& 2*exp(-1.9834) & 0.16311 \\ 
$\sigma _{\varphi }\left( S_{V=1}\right) $ & Labor Supply Regime 1 & 
exp(-2.4983) & 0.29752 \\ 
$\sigma _{\varphi }\left( S_{V=2}\right) ^{\ast }$ & Labor Supply Regime 2 & 
2*exp(-2.4983) & 0.20313 \\ 
$\sigma _{\mu }\left( S_{V=1}\right) $ & Investment-specific technology
Regime 1 & exp(-6.0283) & 0.003998 \\ 
$\sigma _{\mu }\left( S_{V=2}\right) ^{\ast }$ & Investment-specific
technology Regime 2 & 2*exp(-6.0283) & 0.0071897 \\ 
$\sigma _{a}\left( S_{V=1}\right) $ & Neutral technology Regime 1 & 
exp(-3.9013) & 0.037752 \\ 
$\sigma _{a}\left( S_{V=2}\right) ^{\ast }$ & Neutral technology Regime 2 & 
2*exp(-3.9013) & 0.048111 \\ 
$\sigma _{g}$ & Government Spending shock & exp(-3.9013) & 0.0074137 \\ 
$\sigma _{\xi }$ & Monetary Policy & exp(-6.000) & 0.0020286 \\ \hline\hline
\multicolumn{4}{l}{Note: $^{\ast }$ denotes a parameter which is calibrated
by the authors.}%
\end{tabular}

\begin{tabular}{llcc}
\multicolumn{4}{c}{Table C2: Data Sources} \\ \hline\hline
Data series & Description & Units & FRED \\ 
&  &  & series \\ \hline
&  & (1) & (2) \\ \hline
DY\_DATA & Output Growth & \%pa & A939RX0Q048SBEA \\ 
DP\_DATA & Inflation rate & \%pa & GDPDEF \\ 
R\_DATA & Federal Funds Rate & \%pa & FEDFUNDS \\ 
DW\_DATA & Wage Inflation & \%pa & COMPRNFB \\ 
DMU\_DATA & Relat. price of invest. goods & \%pa & PIRIC \\ \hline\hline
\multicolumn{4}{l}{Note: FRED database https://fred.stlouisfed.org/}%
\end{tabular}
\end{center}

\begin{figure} [!h]
    \begin{center}
    \includegraphics[width = 0.95\textwidth]{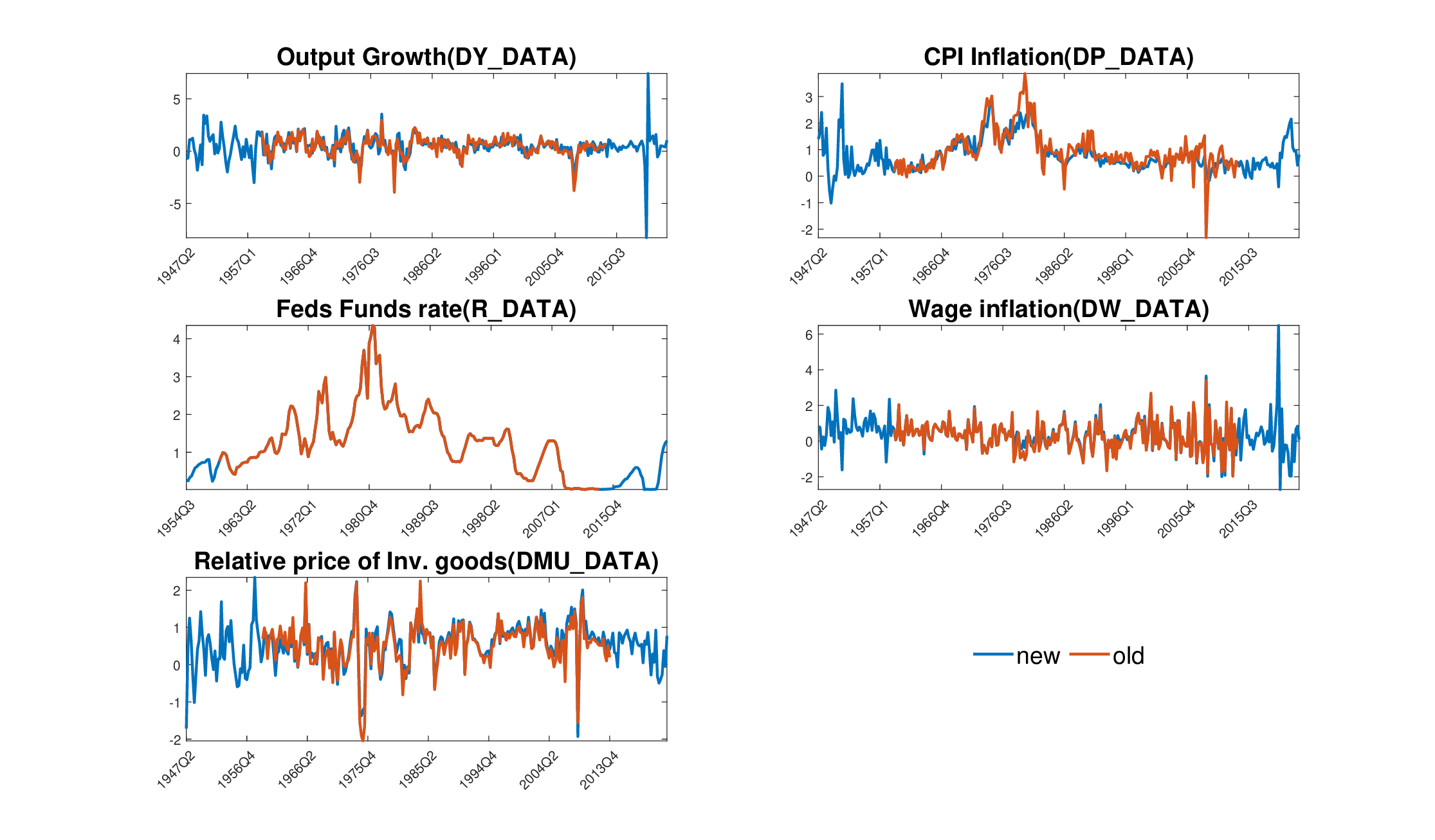}
    \caption{Updated Data.}
    \label{fig:FigureC1}
    \end{center}
\end{figure}

\section{Notes on Implementation\label{App RISE}}

All computations in this paper were coded in RISE$^{\textcopyright}$ (Maih,
2015). RISE toolkit accepts the model description as a text file containing
list of commands and mathematical expressions, converts it into a
state-space form, loads the data, and applies filters and smoothers
discussed in this paper.

In panel A of Table \ref{Tab Speed}, the speed results were obtained with
implementing IMM(1) and KN-GPB(2) as stand-alone procedure, with
optimisation for speed where possible. As discussed in the text, their
implementation in RISE allows handling various non-linearities and missing
observations.

In panel B of Table \ref{Tab Speed}, for comparison of speed within either
GPB or IMM\ families we employ single GPB(N) and IMM(N) filters that accept
an arbitrary order as input, rather than separate codes for different orders
of filtration. Consequently, the number of nested loops is not predetermined
but is managed throughout the computation.

\end{document}